\DeclareMathOperator{\Tr}{Tr}
\DeclareMathOperator{\diag}{diag}
\DeclareMathOperator{\permanent}{Per}
\begin{document}
	\newtheorem{theorem}{\bf~~Theorem}
	\newtheorem{remark}{\bf~~Remark}
	\newtheorem{observation}{\bf~~Observation}
	\newtheorem{definition}{\bf~~Definition}
	\newtheorem{lemma}{\bf~~Lemma}
	\newtheorem{preliminary}{\bf~~Preliminary}
	\newtheorem{proposition}{\bf~~Proposition}
	\newtheorem{comment}{\bf~~Comment}
	\renewcommand\arraystretch{0.9}
	\title{{Revisiting Near-Far Field Boundary in Dual-Polarized XL-MIMO Systems}}
	\author{\normalsize \IEEEauthorblockN{
			{Shuhao Zeng}, \IEEEmembership{\normalsize Member, IEEE},
			{Boya Di}, \IEEEmembership{\normalsize Member, IEEE},
			{Hongliang Zhang}, \IEEEmembership{\normalsize Member, IEEE},\\
			{Zhu Han}, \IEEEmembership{\normalsize Fellow, IEEE},
			{and H. Vincent Poor}, \IEEEmembership{\normalsize Fellow, IEEE}}
		\thanks{Shuhao Zeng is with School of Electronic and Computer Engineering, Peking University Shenzhen Graduate School, Shenzhen, China, and also with Department of Electrical and Computer Engineering, Princeton University, NJ, USA (email: shuhao.zeng96@gmail.com).}
		\thanks{Boya Di and Hongliang Zhang are with School of Electronics, Peking University, Beijing 100871, China (email: hongliang.zhang@pku.edu.cn; boya.di@pku.edu.cn).}
		\thanks{Zhu Han is with Electrical and Computer Engineering Department, University of Houston, Houston, TX, USA, and also with the Department of Computer Science and Engineering, Kyung Hee University, Seoul, South Korea (email: zhan2@uh.edu).}		
		\thanks{H. Vincent Poor is with Department of Electrical and Computer Engineering, Princeton University, NJ, USA (email: poor@princeton.edu).}
	}
	
	\maketitle
	\begin{abstract}
		Extremely large-scale multiple-input multiple-output~(XL-MIMO) is expected to be an important technology in future sixth generation~(6G) networks. Compared with conventional single-polarized XL-MIMO, where signals are transmitted and received in only one polarization direction, dual-polarized XL-MIMO systems achieve higher data rate by improving multiplexing performances, and thus are the focus of this paper. Due to enlarged aperture, near-field regions become non-negligible in XL-MIMO communications, necessitating accurate near-far field boundary characterizations. However, existing boundaries developed for single-polarized systems only consider phase or power differences across array elements while irrespective of cross-polarization discrimination~(XPD) variances in dual-polarized XL-MIMO systems, deteriorating transmit covariance optimization performances. In this paper, we revisit near-far field boundaries for dual-polarized XL-MIMO systems by taking XPD differences into account, which faces the following challenge. Unlike existing near-far field boundaries, which only need to consider co-polarized channel components, deriving boundaries for dual-polarized XL-MIMO systems requires modeling joint effects of co-polarized and cross-polarized components. To address this issue, we model XPD variations across antennas and introduce a non-uniform XPD distance to complement existing near-far field boundaries. Based on the new distance criterion, we propose an efficient scheme to optimize transmit covariance. Numerical results validate our analysis and demonstrate the proposed algorithm's effectiveness.
	\end{abstract}

	\begin{IEEEkeywords}
		Dual-polarized XL-MIMO, 6G, near-far field boundary, non-uniform XPD distance, transmit covariance optimization
	\end{IEEEkeywords}
\vspace{-.4cm}
		
	\section{Introduction}
	Extremely large-scale multiple-input multiple-output~(XL-MIMO) is expected to play an important role in the forthcoming sixth generation~(6G) networks~\cite{Wu_distance_aware_2022,Deng_RHS_2022}, where numerous antenna elements are integrated into a compact space. Benefiting from the large radiation aperture of the resulting antenna array, XL-MIMO can achieve high directive gain~\cite{Han_XLMIMO_2020},~\cite{Zeng_multi_user_RRS}, and {therefore is capable of supporting high-speed data transmissions~\cite{Guo_2023}}. Most existing works on this area mainly focus on single-polarized XL-MIMO, {where both the base station~(BS) and the user equipment~(UE) are equipped with single-polarized antennas~\cite{Zeng_RRS_letter_2022,Guo_2024}}. Wireless signals are essentially a kind of electromagnetic~(EM) waves with polarization characteristics, i.e., the orientation of electric field strength varies with time~\cite{Guru_EM_field_2004}. Since the electric field strength can only oscillate in the plane perpendicular to the propagation direction of the EM waves, there are two orthogonal polarizations~\cite{Stutzman_polarization_2018}, i.e., horizontal and vertical polarizations. However, in single-polarized XL-MIMO, transceivers can only transmit and receive signals in one polarization direction~\cite{Zeng_both_2021}, where the polarization dimension is not fully exploited, thereby leading to degraded communication performance.
	
	By deploying dual-polarized antennas at both the BS and the UEs, dual-polarized XL-MIMO serves as a promising solution to the above issue. Since each dual-polarized antenna element consists of two co-located single-polarized elements, as shown in Fig.~\ref{sysmodel}, the number of antenna elements can be doubled in the same physical enclosure~\cite{Kim_limited_feedback}. Further, benefiting from capability of transmitting and receiving signals in both polarizations, dual-polarized XL-MIMO systems can achieve better multiplexing performance than conventional single-polarized ones~\cite{Zeng_dual_polarized_RIS}.


{
	It is crucial to accurately characterize near-far field boundaries for XL-MIMO communication systems. Specifically, due to the increased aperture of the antenna array, the Fresnel region (i.e., radiating near-field region of the antenna array) is significantly enlarged~\cite{Yue_channel_2024}. Therefore, users are more likely to be situated in the near-field region. Note that channel modeling in the near-field and far-field differs significantly~\cite{Sun_differentiate}. Therefore, it is critical to investigate the boundary between the near-field and far-field, as it affects both channel characterizations and communication performance optimizations. Existing works on near-far field boundary mainly focus on single-polarized XL-MIMO systems~\cite{Selvan_Rayleigh_2017,Lu_XL_MIMO_ICC,Lu_XL_MIMO_2022,effective_Rayleigh}. The classic near-field boundary is called Rayleigh distance~\cite{Selvan_Rayleigh_2017}, which is defined from the perspective of phase error. When the distance between a user and the antenna array exceeds the Rayleigh distance, the maximum phase error across the antenna aperture between using the planar-wave model and the spherical-wave model is no more than $\frac{\pi}{8}$. However, when maximal ratio combining is applied at the antenna array, signal phases can be perfectly aligned, rendering the received power dependent solely on the channel's amplitude response. Therefore, the authors in~\cite{Lu_XL_MIMO_ICC} propose an alternative critical distance as the near-field boundary, defined as the power ratio between the weakest and strongest array elements remaining above a specified threshold. The authors in~\cite{Lu_XL_MIMO_2022} further extend this critical distance to the uniform planar array case by considering the projected aperture across the antenna array in the derivation. Note that these distances do not directly take the communication performances into account. Therefore, in~\cite{effective_Rayleigh}, the authors propose effective Rayleigh distances from a beamforming perspective. This distance ensures that the normalized beamforming gain under the far-field assumption remains no less than $95\%$.}
	
	{However, none of the above near-far field boundaries have considered the variance of cross-polarization discrimination~(XPD) across different antenna elements in dual-polarized XL-MIMO systems, which can deteriorate transmit covariance optimization performances.} Specifically, in dual-polarized wireless communications, XPD is a commonly used performance metric to characterize the channel's ability to preserve radiated polarization purity between horizontally and vertically polarized signals\footnote{Formally, XPD can be defined as the ratio between the co-polarized and cross-polarized channel power gains, which can be found in (\ref{XPD_RRS_2_UE}).}. Due to the enlarged antenna aperture in dual-polarized XL-MIMO systems, the XPDs of channels vary across different BS antennas\footnote{Due to the increased aperture of the antenna array~\cite{Zeng_IMT2030}, the propagation distances from the BS to the UE and the angles of departure~(AoDs) of the BS relative to environmental scatters vary across BS antennas. Therefore, the XPD differ across antennas. This will be further elaborated on in Section~\ref{sec_performance_analysis}.}. Note that most existing near-far field boundaries only consider phase or power differences across array elements while irrespective of such XPD variations. Therefore, they cannot be used to accurately characterize dual-polarized XL-MIMO channels, leading to deteriorated transmit covariance optimization performances\footnote{Due to the variant XPD over antenna elements in dual-polarized XL-MIMO systems, existing transmit covariance optimization schemes designed for dual-polarized massive MIMO systems cannot be directly applied, since they are based on the assumption of equal XPD and pathloss across array elements.}. 
	
	{In this paper, we revisit near-far field boundaries for dual-polarized XL-MIMO systems by taking XPD differences across BS antennas into consideration. Compared with existing power-variation-based near-field boundary developed for single-polarized systems~\cite{Lu_XL_MIMO_ICC,Lu_XL_MIMO_2022}, incorporating XPD variations brings two new challenges. First, unlike the derivation of power-variation-based distances, which only considers propagation distances, the derivation of non-uniform XPD distances should account for both propagation distances and angle-of-departure~(AoD). Second, when deriving non-uniform XPD distances, it is necessary to consider both co-polarized and cross-polarized channel components. In contrast, the existing power-variation-based distance focuses only on the effect of co-polarized components, as it is primarily designed for single-polarized systems.}
	


	To cope with the above challenges, we first model dual-polarized XL-MIMO channels, accounting for XPD differences across BS antennas. Based on this channel model, we introduce a non-uniform XPD distance to complement existing near-far field boundary. When the distance between the BS and the user is shorter than the non-uniform XPD distance, the XPD and pathloss vary across different BS antennas, and consequently existing transmit covariance matrix design methods cannot be directly applied. Therefore, we formulate a transmit covariance optimization problem to maximize the ergodic capacity, where an efficient transmit covariance matrix optimization algorithm is then proposed to solve the formulated problem. Numerical results verify our analysis and demonstrate the effectiveness of the proposed algorithm. The main contributions of this paper are summarized below,
	\begin{itemize}
		\item We propose a dual-polarized XL-MIMO channel model, highlighting the discrepancies in XPD across BS antenna elements by jointly considering the influences of propagation distances and AoD. Based on such models, we introduce a new distance criterion, termed the \textit{non-uniform XPD distance}, which accounts for the XPD variations across array elements and complements existing near-far field boundaries.
		
		\item We formulate a transmit covariance matrix optimization problem to maximize the ergodic capacity\footnote{The optimal transmit covariance can be achieved through hybrid beamforming structure at the BS, such as that in~\cite{Zheng_joint_2024} and~\cite{Shi_spatial_2024}. Further, an optimized transmit covariance matrix can be utilized to guide the design of beamformers at the BS.}, which is decomposed into two sub-problems: an eigenvector design subproblem and a power allocation subproblem. Based on the proposed non-uniform XPD distance, we design an iterative transmit covariance optimization algorithm, where a closed-form solution is first derived for the eigenvector design problem, and then an efficient scheme accounting for non-uniform XPD is proposed to solve the power allocation subproblem.
		\item Simulation results verify the analytical results and show the effectiveness of the proposed algorithm. Further, insights are gained regarding the influences of pathloss and XPD on transmit covariance matrix design. {Specifically, the transmit power (i.e., the diagonal elements of the transmit covariance matrix) tends to be allocated to BS antennas with less severe pathloss or higher XPD.}
	\end{itemize}
	
	The rest of this paper is organized as follows. In Section~\ref{sec_system_model}, a dual-polarized XL-MIMO system is modeled, based on which we introduce non-uniform XPD distances to complement existing near-far field boundary in Section~\ref{sec_performance_analysis}. In Section~\ref{sec_problem_formulation_and_decomposition}, we formulate a transmit covariance optimization problem and decompose it into two subproblems. A permanent-based algorithm is proposed to solve the transmit covariance optimization problem in Section~\ref{sec_algorithm_design}. In Section~\ref{sec_simulation}, simulation results are presented, and finally, conclusions are drawn in Section~\ref{sec_conclusion}.

	\section{System Model}
	\label{sec_system_model}
	\begin{figure}[!t]
		\centering
		\includegraphics[width=0.41\textwidth]{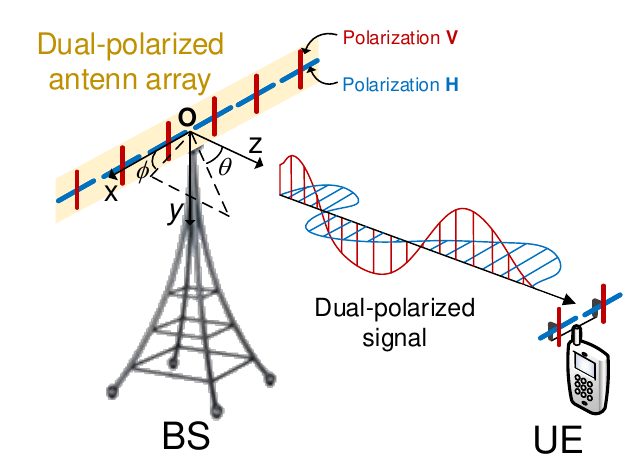}
		\caption{System model of a downlink dual-polarized XL-MIMO network.}
				\vspace{-3mm}
		\label{sysmodel}
	\end{figure}
	\subsection{Scenario Description}
	\label{scenario_description}
	In this paper, we consider a narrow-band downlink XL-MIMO system, where a BS equipped with an extremely large antenna array\footnote{The antenna array at the BS can be either linear or planar.} transmits to one multi-antenna user. Each BS and UE antenna is dual-polarized, consisting of one vertical- and one horizontal-polarized component that are co-located~\cite{Ozdogan_dual_polarized_2023}. Further, denote the number of antennas at the BS and the UE by $M$ and $N$, respectively. 
	
	For ease of exposition, we introduce Cartesian coordinates, where the $x-y$ plane coincides with the antenna array at the BS and the $z$-axis is vertical to the antenna array, as shown in Fig.~\ref{sysmodel}. Under the introduced coordinate system, we use $r_U$, $\theta_U$, and $\phi_U$ to represent the distance between the UE and the origin, the zenith angle of the UE, and the azimuth angle of the UE, respectively. Therefore, the coordinates of the UE can be expressed as $\bm{q}_U=(r_U\sin\theta_U\cos\phi_U,r_U\sin\theta_U\sin\phi_U,r_U\cos\theta_U)^{\mathrm{T}}$.
	\subsection{Channel Model}
	\label{subsec_channel_model}
	Since both the BS antennas and UE antennas are dual-polarized, the channel matrix $\bm{G}\in \mathbb{C}^{2N\times 2M}$ from the BS to the UE consists of four components, i.e.,
	\begin{equation}
		\label{channel}
		\bm{G}=\left[
		\begin{matrix}
			\bm{G}^{(VV)} & \bm{G}^{(VH)}  \\
			\bm{G}^{(HV)} & \bm{G}^{(HH)}
		\end{matrix}
		\right],
	\end{equation}
	where, $\forall i,j\in\{V,H\}$, the $(n,m)$-th element of the matrix $\bm{G}^{(ji)}\in\mathbb{C}^{N\times M}$, i.e., $g^{(ji)}_{n,m}$, is defined as the channel from the $m$-th BS antenna in polarization $i$ to the $n$-th UE antenna in polarization $j$. To this end, the propagation matrix $\bm{G}$ describes the relation from $V$ to $V$, $H$ to $V$, $V$ to $H$, and $H$ to $H$ polarized waves. 
	 
	Due to  interactions between EM waves and environmental scatterers, propagation from the BS to the UE can change polarizations. To characterize the channel's ability to preserve radiated polarization purity between horizontally and vertically polarized signals, XPD is commonly used as a performance metric~\cite{bruno_MIMO_XPC_2013}. Formally, the XPD for the channel from the $m$-th BS antenna to the $n$-th UE antenna is defined as\footnote{Equation~(\ref{XPD_RRS_2_UE}) serves as the formal definition for XPD, while (\ref{def_l}) is the definition for the normalized parameter $l_m$ and (\ref{XPD_model}) is the proposed XPD model. {Further, the value of XPD is independent of the pathloss $\beta_m$, since the pathloss of co-polarized channel components and that of cross-polarized channel components are assumed to be equal~\cite{Ozdogan_dual_polarized_2023}.}}~\cite{Ozdogan_dual_polarized_2023}
	\begin{align}
		\label{XPD_RRS_2_UE}
		XPD_{n,m}\triangleq\frac{\mathbb{E}\{|g^{(HH)}_{n,m}|^2\}}{\mathbb{E}\{|g^{(VH)}_{n,m}|^2\}}=\frac{\mathbb{E}\{|g^{(VV)}_{n,m}|^2\}}{\mathbb{E}\{|g^{(HV)}_{n,m}|^2\}}.
	\end{align}
	High channel XPD indicates that the received co-polarized signal at the UE is significantly stronger than the cross-polarized signal, and thus the channels can maintain radiated polarization purity. Conversely, when channel XPD is low, most transmit power is converted to the orthogonal polarization. 
	
	Due to the increased aperture of the BS antenna array, the XPD of channels varies for different BS antennas. This is because distinct BS antennas can observe diverse propagation environments, including propagation distances and environment scatterer distributions, which will be elaborated on in Section~\ref{sec_performance_analysis}. On the contrary, since the number of UE antennas is much smaller than that of BS antennas, i.e., $N\ll M$, the XPD of channels for different UE antennas is regarded as the same. Therefore, in the following, we neglect the index $n$ in the expression of XPD for brevity, i.e.,
	\begin{align}
		XPD_{n,m}=XPD_{m},~\forall n.
	\end{align}

	{Define $l_m\in [0,1]$ as
	\begin{align}
		\label{def_l}
		l_m=\frac{1}{1+XPD_{m}}.
	\end{align}
	By mapping $XPD_m$, which takes values in $[0,\infty)$, to the interval $[0,1]$, $l_m$ provides a normalized representation that maintains a one-to-one correspondence with $XPD_m$, facilitating easier comparisons and analyses within a bounded range.} { Motivated by the definitions of XPD and $l_m$ in (\ref{XPD_RRS_2_UE}) and (\ref{def_l}), we can model the influence of XPD on channel power gain as follows,}
	\begin{align}
		\label{pathloss}
		\mathbb{E}(|g^{(HH)}_{n,m}|^2)&=\mathbb{E}(|g^{(VV)}_{n,m}|^2)=\beta_m(1-l_{m}),\\
		\label{pathloss_cross}
		\mathbb{E}(|g^{(VH)}_{n,m}|^2)&=\mathbb{E}(|g^{(HV)}_{n,m}|^2)=\beta_ml_{m},
	\end{align}
	where $\beta_m$ is pathloss. Here, the index $m$ indicates that the variation of pathloss over BS antennas cannot be neglected due to the enlarged aperture of the BS antenna array~\cite{Zeng_multi_user_RRS}. Further, we omit the subscript $n$ representing the UE antenna elements because the aperture of the UE antenna array is relatively small, making the pathloss from different UE antenna elements to the BS approximately the same.


	{
	Next, we discuss the correlation structures of the dual-polarized XL-MIMO channels. We assume non-line-of-sight~(NLoS) communications between the BS and the UE, and thus each channel is assumed to possess zero mean~\cite{Yuanwei_tutorial_2023}, i.e., $\mathbb{E}\{g^{(ji)}_{n,m}\}=0$.
	We first focus on the correlation between \emph{different polarizations}. Various measurements indicate that in NLoS scenarios, the transmit and receive cross-polarization correlations are nearly zero~\cite{bruno_MIMO_XPC_2013}, i.e., for any $(n_1,m_1)$ and $(n_2,m_2)$, we can assume
	\begin{align}
		\mathbb{E}\{g^{(VV)}_{n_1,m_1}({g}^{(VH)}_{n_2,m_2})^*\}=\mathbb{E}\{{g}^{(HV)}_{n_1,m_1}({g}^{(HH)}_{n_2,m_2})^*\}=0, 
	\end{align}
	and
	\begin{align}
		\mathbb{E}\{{g}^{(VV)}_{n_1,m_1}({g}^{(HV)}_{n_2,m_2})^*\}=\mathbb{E}\{{g}^{(HH)}_{n_1,m_1}({g}^{(VH)}_{n_2,m_2})^*\}=0, 
	\end{align}
	respectively. In addition, both the co-polarization correlation and the anti-polarization correlation are found to be nearly zero as well~\cite{yu_dual_polarized_2022}, i.e., for any $(n_1,m_1)$ and $(n_2,m_2)$, we can assume
	\begin{align}
		\mathbb{E}\{{g}^{(VV)}_{n_1,m_1}({g}^{(HH)}_{n_2,m_2})^*\}=0,\\
		\mathbb{E}\{{g}^{(HV)}_{n_1,m_1}({g}^{(VH)}_{n_1,m_1})^*\}=0.
	\end{align}
	Further, we assume that both the BS antenna elements and the UE antenna elements are well spaced so that the MIMO channels are spatially uncorrelated\footnote{In this paper, we aim to model the variance of XPD with antenna elements in dual-polarized XL-MIMO networks and investigate its influence on transmit covariance optimizations. Therefore, we neglect the spatial correlation due to near-field spherical-wave propagation~\cite{Dong_spatial_correlation_early} and assume i.i.d. channels for simplicity so as to gain insights into these open problems. The investigation of transmit covariance design under spatially correlated XL-MIMO channels is an interesting topic for future work.}, i.e., $\forall i, j\in \{V,H\}$, and for any $(n_1,m_1)$ and $(n_2,m_2)$, we have
	\begin{align}
		\mathbb{E}\{{g}^{(ji)}_{n_1,m_1}({g}^{(ji)}_{n_2,m_2})^*\}=0.
	\end{align}}
	
	Define $\bm{x}^{(V)}\in\mathbb{C}^{M\times 1}$ and $\bm{x}^{(H)}\in\mathbb{C}^{M\times 1}$ as the transmitted signals of the $V$-polarized and $H$-polarized antennas at the BS, respectively. Further, by cascading $\bm{x}^{(V)}$ and $\bm{x}^{(H)}$, we can acquire the overall transmitted signals $\bm{x}$, i.e., $\bm{x}=[(\bm{x}^{(V)})^{\mathrm{T}},(\bm{x}^{(H)})^{\mathrm{T}}]^{\mathrm{T}}$. Based on the above channel model, the received signal $\bm{y}\in\mathbb{C}^{2N\times 1}$ at the UE can be written as,
	\begin{align}
		\label{rec_sig}
		\bm{y}=\bm{G}\bm{x}+\bm{n},
	\end{align}
	where $\bm{n}\in \mathbb{C}^{2N\times 1}$ is additive white Gaussian noise~(AWGN) at the UE with zero mean and $\sigma^2$ as variance. Based on the received signal model in (\ref{rec_sig}), the ergodic capacity of dual-polarized XL-MIMO systems is given by\footnote{We assume that the statistical channel state information~(CSI) is available to the transmitter while the instantaneous CSI is unknown to the transmitter.}~\cite{Shin_Keyhole_2003,A_2005},
	\begin{align}
		\label{ergodic_capacity}
		C=\mathbb{E}\{\log_2\det(\bm{I}_{2N}+\gamma \bm{G}\bm{Q}\bm{G}^{\dagger})\}.
	\end{align}
	Here, $\mathbb{E}\{\cdot\}$ denotes expectation, $\gamma=\frac{P}{\sigma^2}$ is transmit signal-to-noise ratio~(SNR), where $P$ represents the transmit power, and $\bm{Q}$ is normalized transmit covariance matrix, i.e., $\bm{Q}=\frac{1}{P}\mathbb{E}\{\bm{x}\bm{x}^{\dagger}\}$. 
	{
	\begin{remark}
		XPDs have an impact on system capacity by influencing the distribution of channel matrix $\bm{G}$. This is because according to (\ref{ergodic_capacity}), the ergodic capacity of the considered XL-MIMO system is dependent on the distribution of channel matrix  $\bm{G}$. Further, based on (\ref{def_l})-(\ref{pathloss_cross}), it can be found that XPDs have an impact on the distribution of channel matrix $\bm{G}$.
	\end{remark}}
	
	\section{Near-Far Field Boundary Analysis}
	\label{sec_performance_analysis}
	In this section, we first characterize the XPD of dual-polarized XL-MIMO channels. On this basis, we introduce \textit{non-uniform XPD distances} to complement existing near-far field boundary. Then, extensive discussions on performance analysis are provided, including non-uniform XPD aperture, and the challenge brought by the non-uniform XPD and pathloss to transmit covariance matrix design.
	\vspace{-0.2cm}
	\subsection{Characterizations of XPD}
	There are two main factors that can result in the discrepancies in XPDs across BS antenna elements:
	\begin{itemize}
		\item \emph{Different propagation distances}: Both co-polarized and cross-polarized signal components decrease with the propagation distance $d_m$ from the $m$-th BS antenna to the UE, where the attenuation with distance is best modeled as exponential decay~\cite{Shafi_polarized_2006}. Further, decay exponents for the co-polarized and cross-polarized signals are different. Therefore, XPD, representing the power ratio between received co-polarized and cross-polarized signals, also varies with distance. Due to the increased aperture of the BS antenna array, the discrepancies in distances from different BS antenna elements to the UE cannot be neglected, leading to the variation of XPD with different BS antenna elements. 
		\item \emph{Different angles of departure~(AoD)}: Due to the differing AoDs of various BS antennas relative to clusters in wireless propagation environments, signals transmitted by different BS antennas exhibit distinct polarization directions upon incidence on the clusters. Consequently, this variation leads to differences in the polarization directions of scattered signals and their corresponding received signals, resulting in distinct XPD values.
		
	\end{itemize} 
	
	To account for the influence of propagation distances $d_{m}$, the following term is incorporated into the expression of XPD~\cite{Shafi_polarized_2006}, i.e.,
	\begin{align}
		\label{influence_distance}
		\chi_1(d_{m})=XPD_{d_{m}=1}\times(d_{m})^{{\eta}}.
	\end{align}
	Here, $XPD_{d_{m}=1}$ represents the XPD under unit propagation distance. Further, $\eta$ is a positive constant, which indicates that the XPD increases with propagation distances. This is because the decay of the co-polarized component with distance is less severe than the cross-polarized one~\cite{Shafi_polarized_2006}. 
	
	Denote the influence of AoDs on XPD by $\chi_2$. {Based on~\cite{Quitin_multipolarized_2009}, the expression of $\chi_2$ is given by (\ref{influence_AoD}) shown at the bottom of the next page.}
	\begin{figure*}[!hb]
		\rule[-12pt]{17.5cm}{0.05em}
		\begin{equation}
			\setlength{\abovedisplayskip}{3pt}
			\setlength{\belowdisplayskip}{-3pt}
			\begin{aligned}
				\label{influence_AoD}
				&\chi_2(\{\phi_m^{(l)}\}_l)\\
				&=\chi^{(0)}_2\frac{\sum\limits_{l=1}^L 2e^{\frac{\Delta \varphi^{(l)}}{\sqrt{2}\omega^{(l)}}}(1+2(\omega^{(l)})^2-\cos(2\phi_m^{(l)}))-2(1+2(\omega^{(l)})^2)+2\cos(\Delta \varphi^{(l)})\cos(2\phi_m^{(l)})-2\sqrt{2}\omega^{(l)}\sin(\Delta \varphi^{(l)})\cos(2\phi_m^{(l)})}{\sum\limits_{l=1}^L 2e^{\frac{\Delta \varphi^{(l)}}{\sqrt{2}\omega^{(l)}}}(1+2(\omega^{(l)})^2+\cos(2\phi_m^{(l)}))-2(1+2(\omega^{(l)})^2)-2\cos(\Delta \varphi^{(l)})\cos(2\phi_m^{(l)})+2\sqrt{2}\omega^{(l)}\sin(\Delta \varphi^{(l)})\cos(2\phi_m^{(l)})},
			\end{aligned}
		\end{equation}
	\end{figure*}
	{Here, the numerator models the dependence of the co-polarized channel power gain, i.e., $\mathbb{E}\{|g^{(ii)}_{n,m}|^2\}$~$(i\in\{V, H\})$, over the angles of departure while the denominator models the relationship between the cross-polarized channel power gain, i.e., $\mathbb{E}\{|g^{(ji)}_{n,m}|^2\}$~$(i\neq j)$, and the angles of departure.} Further, $L$ represents the number of clusters in wireless propagation environments, {$\phi_m^{(l)}$ is the mean azimuth departure angle of the $l$-th cluster corresponding to the $m$-th BS antenna}\footnote{For simplicity, the effects of elevation angles are neglected here~\cite{Quitin_multipolarized_2009}}, which varies with different BS antennas. {Further, the truncation spread $\Delta \varphi^{(l)}$ describes the range of the azimuth departure angle at the BS, i.e., all the possible azimuth departure angles for the multipath signals within the $l$-th cluster belong to the range $[\phi^{(l)}_m-\Delta \varphi^{(l)}/2,\phi^{(l)}_m+\Delta \varphi^{(l)}/2]$.} In addition, $\omega^{(l)}$ is the azimuth spread corresponding to the $l$-th cluster, and $\chi_2^{(0)}$ is a normalized factor which is selected such that the term $\chi_2$ is equal to $1$ for the $0$-th BS antenna (the one at the center of the array), i.e., $\chi_2(\{\phi_0^{(l)}\}_l)=1$. 
	
	By combining (\ref{influence_distance}) and (\ref{influence_AoD}), we propose an XPD model, as shown in the following theorem.
	\begin{theorem}
		\label{theorem_XPD_model}
		The XPD of the channel from the $m$-th BS antenna to the UE can be modeled as
		\begin{align}
			\label{XPD_model}
			XPD_m=\chi_1(d_{m})\chi_2(\{\phi_m^{(l)}\}_l),
		\end{align}
		where $\chi_1(d_{m})$ defined in (\ref{influence_distance}) and $\chi_2(\{\phi_m^{(l)}\}_l)$ defined in (\ref{influence_AoD}) characterize the influences of discrepancies in propagation distances and AoDs across different BS antennas, respectively\footnote{Here, we do not assume that the propagation distances and AoDs are independent. Actually, as the propagation distance varies, the user's location changes accordingly. This shift in position can result in different propagation paths, ultimately leading to variations in AoDs.}.
	\end{theorem}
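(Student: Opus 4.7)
The plan is to establish the multiplicative decomposition in (\ref{XPD_model}) by separately tracking how the co-polarized and cross-polarized channel power gains depend on the propagation distance $d_m$ and on the per-cluster AoDs $\{\phi_m^{(l)}\}_l$, and then forming their ratio so that the distance-dependent and angle-dependent contributions emerge as independent multiplicative factors.

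First, I would start from the defining identity $XPD_m=\mathbb{E}\{|g^{(HH)}_{n,m}|^2\}/\mathbb{E}\{|g^{(VH)}_{n,m}|^2\}$ from (\ref{XPD_RRS_2_UE}) and write each channel coefficient as a superposition of contributions from the $L$ scattering clusters. For each cluster, I would decompose its contribution into (i) a distance-dependent amplitude, in which the co-polarized and cross-polarized components attenuate with different exponents of $d_m$ as modeled in~\cite{Shafi_polarized_2006}, and (ii) a polarization-transfer coefficient that depends on the cluster geometry and the AoD $\phi_m^{(l)}$.

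Second, I would isolate the distance dependence by factoring out the $d_m$-dependent amplitudes. Since the decay exponent of the co-polarized component is smaller than that of the cross-polarized one, the ratio of their mean-square amplitudes scales as $d_m^{\eta}$ for some $\eta>0$; after absorbing the reference-distance constant into $XPD_{d_m=1}$, this reproduces $\chi_1(d_m)$ as defined in (\ref{influence_distance}). Third, with the distance scaling removed, the remaining ratio depends purely on how each cluster re-radiates vertical and horizontal incident polarizations. Substituting the per-cluster scattering model of~\cite{Quitin_multipolarized_2009} parameterized by $(\phi_m^{(l)},\omega^{(l)},\Delta\varphi^{(l)})$ and summing over the $L$ clusters yields the closed-form ratio in (\ref{influence_AoD}), with the normalization $\chi_2^{(0)}$ chosen so that $\chi_2(\{\phi_0^{(l)}\}_l)=1$ at the array center. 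Combining the two factors then gives (\ref{XPD_model}).

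The hard part is justifying the clean product structure rather than a more entangled form: in principle, the distance attenuation and the polarization-mixing at scatterers could be coupled, because different clusters contribute unequally to the two polarizations while simultaneously having different effective path lengths. The decoupling that yields (\ref{XPD_model}) relies on the modeling assumption that the AoD-dependent polarization transfer at each cluster is a purely geometric factor that can be separated from the overall distance-dependent attenuation once expectations are taken over the small-scale fading. Verifying that this separation is consistent with the cluster-level scattering statistics used in~\cite{Shafi_polarized_2006,Quitin_multipolarized_2009} is the step that will require the most care.
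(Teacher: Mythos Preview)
Your proposal is considerably more ambitious than what the paper actually does. In the paper, Theorem~\ref{theorem_XPD_model} is not proved at all: the sentence immediately preceding it reads ``By combining (\ref{influence_distance}) and (\ref{influence_AoD}), we \emph{propose} an XPD model, as shown in the following theorem,'' and there is no corresponding appendix or derivation. In other words, (\ref{XPD_model}) is a modeling \emph{postulate}: the distance-dependent factor $\chi_1$ is taken from~\cite{Shafi_polarized_2006}, the AoD-dependent factor $\chi_2$ is taken from~\cite{Quitin_multipolarized_2009}, and the paper simply multiplies them together (with $\chi_2^{(0)}$ chosen so that $\chi_2$ is normalized at the array center). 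No attempt is made to justify the multiplicative separation from first principles.

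Your plan, by contrast, tries to \emph{derive} the product structure by writing $g^{(ji)}_{n,m}$ as a sum over clusters, separating distance attenuation from polarization transfer, and then taking expectations. This is a genuinely different and more principled route, and you correctly flag the crux: the clean factorization $XPD_m=\chi_1\cdot\chi_2$ does not follow automatically once different clusters have different path lengths and different polarization-mixing coefficients, because the numerator and denominator of (\ref{XPD_RRS_2_UE}) are each sums over clusters, and a ratio of sums is not in general a product of ratios. To make your argument go through you would need an extra assumption---for example, that the distance-dependent attenuation is common to all clusters (so it factors out of both sums), or that a single dominant cluster determines the XPD. The paper sidesteps this entirely by treating (\ref{XPD_model}) as a definition rather than a consequence, so if your goal is to match the paper, a one-line statement that (\ref{XPD_model}) is adopted as the XPD model by combining the empirical laws (\ref{influence_distance}) and (\ref{influence_AoD}) is all that is required.
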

	 
\subsection{Non-Uniform XPD Distance}
Based on the XPD model in Theorem~\ref{theorem_XPD_model}, we introduce a non-uniform XPD distance as a complement to existing Rayleigh distances for separating near-field and far-field regions. The non-uniform XPD distance $r_U^{th}$ represents a threshold for the BS-UE distance $r_U$, where the discrepancy in XPD across different BS antennas cannot be neglected when $r_U<r_U^{th}$. The formal definition of $r_U^{th}$ is given in the following.

{
\begin{definition}
	\label{def_different_XPD_distance}
	Given the aperture of the BS antenna array, the non-uniform XPD distance $r_U^{th}$ is defined as\footnote{We do not use the definition of $\frac{\max_m (\chi_1 \chi_2)}{\min_m (\chi_1 \chi_2)} \geq r_u^{th}$, as it inherently couples the propagation-dependent term \(\chi_1\) and the AoD-dependent term \(\chi_2\). This coupling complicates the analytical derivation of closed-form expressions for non-uniform XPD distances. Therefore, to gain insights into the non-uniform XPD distances and non-uniform apertures, we decouple \(\chi_1\) and \(\chi_2\) by separately considering $\frac{\max_m  \chi_1}{\min_m \chi_1} \geq r_1^{th}$ and $\frac{\max_m  \chi_2}{\min_m \chi_2} \geq r_2^{th}$ in this paper. Moreover, it can be found the non-uniform XPD distances under the two definitions are similar, which further justifies the proposed definition. The same rationale applies to the definition of non-uniform XPD aperture in (\ref{def_non_unifrom_XPD_aperture}).}
	\begin{align}
		\label{def_non_unifrom_XPD_distance}
		r_U^{th}\triangleq
		\operatorname*{argmax}\limits_{r_U}  \left(\frac{\max_m\chi_1}{\min_m\chi_1}\ge\gamma^{th}_1\right)\text{and}  \left(\frac{\max_m\chi_2}{\min_m\chi_2}\ge\gamma^{th}_2\right),
	\end{align} 
	where $r_U$ represents the distance between the UE and the center of the BS antenna array, the definition of $\chi_1$ is given in (\ref{influence_distance}), and $m$ represents the index of the BS antennas.
	
	The definition in (\ref{def_non_unifrom_XPD_distance}) indicates that when the BS-UE distance $r_U\le r_U^{th}$, the ratio between the maximum and minimum values of the XPD component $\chi_1$ and that for $\chi_2$ are no smaller than predetermined thresholds.
\end{definition}}

Before providing an exact expression for the non-uniform XPD distance, we first introduce the definitions of some parameters. We use $k$ to represent the ratio between the size of the antenna array in the $y$-axis and that in the $x$-axis. Further, we define parameter $\delta_U$ as
\begin{align}
	\label{delta_U}
	\delta_U=\left|\frac{\sin\theta_U\cos\phi_U+k\sin\theta_U\sin\phi_U}{\sqrt{1+k^2}}\right|,
\end{align}
Based on Definition~\ref{def_different_XPD_distance} and the XPD model in Theorem~\ref{theorem_XPD_model}, we acquire the exact expression of $r_U^{th}$, as shown in the following theorem.

{
\begin{theorem}
	\label{theorem_maximum_user_distance}
	The non-uniform XPD distance $r_U^{th}$ is given by
	\begin{align}
		\label{expression_distance_th}
		r_U^{th}=&\max\Bigg\{r_1^{th},\notag\\
		&\frac{\gamma_2^{th}+1}{\gamma_2^{th}-1}\left|\sum_{l}\frac{\partial \chi_2}{\partial \phi_m^{(l)}} \bigg|_{\phi_m^{(l)}=\phi_0^{(l)}}\frac{\sin(\arctan(k)+\phi_0^{(l)})D}{2c^{(l)}\sin\theta_0^{(l)}}\right|\Bigg\}.
	\end{align}
	where $\eta$ is defined in (\ref{influence_distance}), $D$ represents the diagonal dimension of the BS antenna array, $\delta_U$ is defined in (\ref{delta_U}), $k$ represents the ratio between the size of the antenna array in the $y$-axis and that in the $x$-axis, $c^{(l)}$ is the ratio between the BS-scatterer distance and the BS-UE distance, i.e., $c^{(l)}=\frac{r_0^{(l)}}{r_U}$\footnote{Note that as the user moves away from the BS, the distance between the BS and scatterer clusters also becomes larger~\cite{Zhou_geo_2019}. Therefore, to characterize this trend, we model the distance $r_0^{(l)}$ between the $l$-th cluster and the BS by $r_0^{(l)}=c^{(l)}r_U$, where $r_U$ is the distance between the user and the BS.}, and $(\phi_m^{(l)},\theta_m^{(l)})$ are the mean azimuth and elevation departure angle of the $l$-th cluster corresponding to the $m$-th BS antenna, respectively. Further, $r_1^{th}$ characterizes the influence of the XPD term $\chi_1$ on the non-uniform XPD distance, and its expression can be found in (\ref{pattern_feed}) at the bottom of the next page. Here, $I(\cdot)$ represents the indicator function.

	\begin{figure*}[!hb]
		\rule[-12pt]{17.5cm}{0.05em}
		\begin{equation}
			\setlength{\abovedisplayskip}{3pt}
			\setlength{\belowdisplayskip}{-3pt}
			{
			\begin{aligned}
				\label{pattern_feed}
				r_1^{th}=\left(1-I\left(\delta_U- \sqrt{1-\frac{4}{(\gamma_1^{th})^{\frac{2}{\eta}}+3}}\right)\right)\frac{-D\delta_U-\sqrt{((\gamma_1^{th})^\frac{2}{\eta}-1)D^2(1-\delta_U^2)}}{2(1-(\gamma_1^{th})^\frac{2}{\eta}+(\gamma_1^{th})^\frac{2}{\eta}\delta_U^2)}+I\left(\delta_U- \sqrt{1-\frac{4}{(\gamma_1^{th})^{\frac{2}{\eta}}+3}}\right)\frac{\gamma_1^{th}+1}{\gamma_1^{th}-1}\frac{\eta D \delta_U}{2}.
			\end{aligned}}
		\end{equation}
	\end{figure*}
\end{theorem}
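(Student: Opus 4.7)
The plan is to exploit the decoupling built into Definition~\ref{def_different_XPD_distance}: I treat the two threshold inequalities for $\chi_1$ and $\chi_2$ separately, derive a critical BS--UE distance for each factor, and then combine by taking the larger of the two, since the non-uniform XPD region is the union over which either factor varies appreciably.

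For the $\chi_1$ part, (\ref{influence_distance}) gives $\chi_1(d_m)\propto d_m^{\eta}$, so $\max_m \chi_1/\min_m \chi_1=(d_{\max}/d_{\min})^{\eta}$. I parametrise antenna positions along the array's major diagonal (direction $(1,k,0)/\sqrt{1+k^2}$, consistent with $\delta_U$ in (\ref{delta_U})) by $s\in[-D/2,D/2]$, obtaining $d(s)^{2}=r_U^{2}-2r_U\delta_U s+s^{2}$. The maximum is always at $s=-D/2$, yielding $d_{\max}^{2}=r_U^{2}+r_U D\delta_U+D^{2}/4$; the minimum is either at the interior stationary point $s=r_U\delta_U$ (when $r_U\delta_U\le D/2$), giving $d_{\min}^{2}=r_U^{2}(1-\delta_U^{2})$, or at the opposite endpoint $s=D/2$, giving $d_{\min}^{2}=r_U^{2}-r_U D\delta_U+D^{2}/4$. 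Setting $(d_{\max}/d_{\min})^{\eta}=\gamma_1^{th}$ in each case produces a quadratic in $r_U$. In the interior case, selecting the sign that yields a positive root (using $1-(\gamma_1^{th})^{2/\eta}(1-\delta_U^2)<0$) gives the first summand of (\ref{pattern_feed}); in the endpoint case, solving the quadratic exactly and performing a leading-order expansion in $D/r_U$ recovers $r_1^{th}\approx\frac{\gamma_1^{th}+1}{\gamma_1^{th}-1}\cdot\frac{\eta D\delta_U}{2}$, matching the second summand.

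To pin down the dividing threshold inside the indicator of (\ref{pattern_feed}), I substitute the interior-case $r_1^{th}$ back into the self-consistency condition $r_U\delta_U\le D/2$. With $\alpha=(\gamma_1^{th})^{1/\eta}$, $u=\delta_U^{2}$, and $v=\alpha^{2}-1$, the inequality reduces to $(v-u(v+1))(v-u(v+4))\ge 0$. The first factor equals $\alpha^{2}(1-\delta_U^{2})-1$, which is already positive whenever the interior case is relevant, so the cut-off is governed by $v-u(v+4)\ge 0$, i.e., $\delta_U^{2}\le(\alpha^{2}-1)/(\alpha^{2}+3)=1-4/((\gamma_1^{th})^{2/\eta}+3)$, which is exactly the argument of $I(\cdot)$ in (\ref{pattern_feed}).

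For the $\chi_2$ part, the normalisation $\chi_2(\{\phi_0^{(l)}\}_l)=1$ lets me linearise: $\chi_2\approx 1+\sum_l\partial_{\phi_m^{(l)}}\chi_2|_{\phi_0^{(l)}}(\phi_m^{(l)}-\phi_0^{(l)})$. By array symmetry the peak positive and negative perturbations are essentially equal in magnitude, so $\max\chi_2/\min\chi_2\approx (1+\Delta)/(1-\Delta)$ with $\Delta$ the peak $|\cdot|$ of this sum; the condition $(1+\Delta)/(1-\Delta)\ge\gamma_2^{th}$ is equivalent to $\Delta\ge(\gamma_2^{th}-1)/(\gamma_2^{th}+1)$. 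A small-angle computation (valid because $r_0^{(l)}\gg D$ in the relevant regime) for a BS antenna at the diagonal endpoint $\pm(D/2)(1,k,0)/\sqrt{1+k^2}$ yields $|\phi_m^{(l)}-\phi_0^{(l)}|\approx D\,|\sin(\arctan(k)+\phi_0^{(l)})|/(2r_0^{(l)}\sin\theta_0^{(l)})$. Substituting $r_0^{(l)}=c^{(l)}r_U$ and inverting the threshold inequality for $r_U$ produces the second argument of the outer $\max$ in (\ref{expression_distance_th}). Since $r_U$ lies in the non-uniform XPD region whenever \emph{either} condition is triggered, the overall threshold is their maximum, establishing (\ref{expression_distance_th}). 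The main obstacle I anticipate is the case analysis for $r_1^{th}$: correctly locating $d_{\min}$ (interior versus boundary), selecting the correct root of the quadratic for each case, and verifying self-consistency of the interior solution to obtain the precise indicator cut-off. The algebra simplifies only once the factorisation $(v-u(v+1))(v-u(v+4))$ is spotted; the $\chi_2$ analysis, while notationally heavy due to (\ref{influence_AoD}), is conceptually just a first-order Taylor expansion followed by a small-angle geometric computation.
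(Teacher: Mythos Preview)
Your proposal is correct and follows essentially the same route as the paper's proof in Appendix~\ref{app_theorem_maximum_distance}: decouple $\chi_1$ and $\chi_2$, restrict to the diagonal of the array, identify the two regimes for $d_{\min}$ (endpoint versus interior projection), Taylor-expand in the endpoint regime, solve the quadratic exactly in the interior regime, and linearise $\chi_2$ with the small-angle AoD perturbation together with $r_0^{(l)}=c^{(l)}r_U$. Your derivation of the indicator cut-off via the self-consistency condition $r_U\delta_U\le D/2$ and the factorisation $(v-u(v+1))(v-u(v+4))$ is in fact more explicit than the paper, which simply asserts the threshold in~(\ref{user_dir}) without detailing its origin.
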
}
\begin{proof}
	See Appendix~\ref{app_theorem_maximum_distance}.
\end{proof}

Based on Theorem~\ref{theorem_maximum_user_distance}, we can arrive at the following remark.
\begin{remark}
	Unlike the conventional near-far field boundary based on Rayleigh distance $r_{Ray}$, which is proportional to the square of the diagonal dimension of the BS antenna array, i.e., $r_{Ray}\propto D^2$, the non-uniform XPD distance is proportional to the diagonal dimension of the BS antenna array, i.e., $r_U^{th}\propto D$. 
\end{remark}
Note that the XPD factor $\eta$ given in (\ref{influence_distance}) is usually smaller than pathloss component~\cite{Shafi_polarized_2006}. Therefore, we have the following remark.
\begin{remark}
	\label{remark_shorter_XPD_distance_than_power_distance}
	When the BS-user distance is shorter than the non-uniform XPD distance, both the XPD and pathloss (i.e., $\beta_m$) vary across BS antenna elements.
	%
\end{remark} 
	\vspace{-0.2cm}
	\subsection{Extensive Discussions on System Performances}
	In the following, we try to answer another important question on non-uniform XPD, i.e., given the distance between the BS and the UE, how large should the antenna aperture be to achieve significant variations in XPD across different BS antennas, which is also termed as non-uniform XPD aperture\footnote{When analyzing non-uniform XPD aperture, the influences of both propagation distances and AoD are considered. Therefore, we cannot simply acquire the expression of non-uniform XPD aperture from Theorem~\ref{theorem_maximum_user_distance}.}.
	\subsubsection{Non-Uniform XPD Aperture}
	 The formal definition of the non-uniform XPD aperture $A^{th}$ is given in the following theorem.
	{\begin{definition}
		\label{def_different_XPD}
		Given the distance between the BS and the UE, the non-uniform XPD aperture $A^{th}$ is defined as
\begin{align}
	\label{def_non_unifrom_XPD_aperture}
	A^{th}\triangleq
	\operatorname*{argmin}\limits_{A}  \left(\frac{\max_m\chi_1}{\min_m\chi_1}\ge\gamma^{th}_1\right)\text{and}  \left(\frac{\max_m\chi_2}{\min_m\chi_2}\ge\gamma^{th}_2\right),
\end{align} 
	where $A$ represents the aperture of the BS antenna array, $m$ represents the index of BS antennas, $\chi_1$ and $\chi_2$ are XPD components defined in (\ref{influence_distance}) and (\ref{influence_AoD}), respectively, and $\gamma^{th}_1$ and $\gamma^{th}_2$ are corresponding predetermined thresholds.		
	
	The definition in (\ref{def_non_unifrom_XPD_aperture}) indicates that when the antenna aperture $A\ge A^{th}$, the ratio between the maximum and minimum values of the XPD component $\chi_1$ and that of $\chi_2$ are no smaller than predetermined thresholds.
	\end{definition}}
	
	Before providing the exact expression of the non-uniform XPD aperture, we first introduce the definitions of some parameters. Define parameter $b$ as
	\begin{align}
		\label{expression_b}
		b=\left|\sum_l \left.\dfrac{\partial \chi_2 }{\partial \phi_m^{(l)}}\right |_{\{\phi_m^{(l)}\}_l=\{\phi_0^{(l)}\}_l}\frac{-\delta^{(l)}\sqrt{1-(\delta^{(l)})^2}}{|\delta^{(l)}|r^{(l)}_0\sin\theta^{(l)}_0}\right|,
	\end{align}
	where $(r^{(l)}_0, \theta^{(l)}_0, \phi_0^{(l)})$ is the distance, elevation departure angle, and azimuth departure angle of the $l$-th cluster with respect to the $0$-th BS antenna, respectively. In addition, the expression of $\delta^{(l)}$ in (\ref{expression_b}) is given by $\delta^{(l)}=\frac{\cos\phi_0^{(l)}+k\sin\phi_0^{(l)}}{\sqrt{1+k^2}}$. 
	
	Then, based on Definition~\ref{def_different_XPD}, the exact expression of non-uniform XPD aperture $A^{th}$ can be acquired, as shown in the following theorem.
	{
	\begin{theorem}
		\label{theorem_minimum_array_size}
		Non-uniform XPD aperture $A^{th}$ can be expressed as
		\begin{align}
			\label{expression_aperture_th}
			A^{th}=\max\Bigg\{&\frac{k}{1+k^2}\left(\frac{2r_U}{\eta\delta_U}(1-\frac{2}{\gamma^{th}_1+1})\right),\notag\\
			&\frac{k}{1+k^2}\left(\frac{2\chi_2(\{\phi_0^{(l)}\}_l)(\gamma^{th}_2-1)}{(\gamma^{th}_2+1)b}\right)^2\Bigg\},
		\end{align}
		where $r_U$ is the distance between the UE and the center of the BS antenna, $\eta$ is an XPD parameter defined in (\ref{influence_distance}), $\chi_2(\{\phi_0^{(l)}\}_l)$ represents the value of term $\chi_2(\{\phi_m^{(l)}\}_l)$ corresponding to the $0$-th BS antenna, i.e., the one at the center of the array, and the definitions of $\delta_U$ and $b$ can be found in (\ref{delta_U}) and (\ref{expression_b}), respectively.
		
	\end{theorem}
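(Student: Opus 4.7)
The plan is to decouple the joint constraint in Definition~2 into two independent sub-problems: the minimum aperture $A_1^{th}$ such that $\max_m\chi_1/\min_m\chi_1\ge\gamma_1^{th}$, and the minimum aperture $A_2^{th}$ such that $\max_m\chi_2/\min_m\chi_2\ge\gamma_2^{th}$. Because both inequalities must simultaneously hold, the argmin of the joint condition is $\max\{A_1^{th}, A_2^{th}\}$, which produces the outer maximum in~(\ref{expression_aperture_th}). To compute each sub-aperture I would exploit the fact that a rectangular BS array with side-ratio $k$ satisfies $A=\frac{k}{1+k^2}D^2$ (equivalently $a_x=\sqrt{A/k}$, $a_y=\sqrt{Ak}$), which is where the prefactor $\frac{k}{1+k^2}$ enters both entries.

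For $A_1^{th}$, since $\chi_1(d_m)=XPD_{d_m=1}\,d_m^{\eta}$, the ratio $\max_m\chi_1/\min_m\chi_1$ is controlled by $(d_{\max}/d_{\min})^{\eta}$. I would expand $d_m\approx r_U-\bm{\rho}_m\cdot\hat{\bm{u}}_U$ to first order in $\|\bm{\rho}_m\|/r_U$, where $\bm{\rho}_m$ is the in-plane position of antenna $m$ and $\hat{\bm{u}}_U$ is the projection onto the array plane of the BS-to-UE unit vector. The extrema are then attained at opposite diagonal corners, and the resulting separation along $\hat{\bm{u}}_U$ equals $D\delta_U$ with $\delta_U$ as in~(\ref{delta_U}). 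Linearizing $(1+x)^{\eta}\approx 1+\eta x$ and imposing $\frac{1+\eta D\delta_U/(2r_U)}{1-\eta D\delta_U/(2r_U)}=\gamma_1^{th}$ solves for $D$, producing the first entry of~(\ref{expression_aperture_th}) after substitution into $A=\frac{k}{1+k^2}D^2$.

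For $A_2^{th}$, I would Taylor-expand $\chi_2(\{\phi_m^{(l)}\}_l)$ about $\{\phi_0^{(l)}\}_l$, giving $\chi_2\approx\chi_2(\{\phi_0^{(l)}\})+\sum_l\frac{\partial\chi_2}{\partial\phi_m^{(l)}}\big|_{\phi_0^{(l)}}\,\Delta\phi_m^{(l)}$. Using the cluster geometry $r_0^{(l)}=c^{(l)}r_U$ together with the small-displacement relation $\Delta\phi_m^{(l)}\approx-(\bm{\rho}_m\cdot\hat{\bm{t}}_l)/(r_0^{(l)}\sin\theta_0^{(l)})$, where $\hat{\bm{t}}_l$ is the in-plane tangent to the azimuthal direction of cluster $l$, each $\Delta\phi_m^{(l)}$ becomes linear in $\bm{\rho}_m$. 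Evaluating the max/min at opposite diagonal corners collapses the $L$-term sum into a single scalar multiplying $D/2$, and the resulting projection coefficients $-\delta^{(l)}\sqrt{1-(\delta^{(l)})^2}/|\delta^{(l)}|$ are exactly those appearing inside $b$ in~(\ref{expression_b}). Setting $\frac{\chi_2(\{\phi_0^{(l)}\})+bD/2}{\chi_2(\{\phi_0^{(l)}\})-bD/2}=\gamma_2^{th}$, solving for $D$, and applying $A=\frac{k}{1+k^2}D^2$ yields the second entry of~(\ref{expression_aperture_th}).

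The main obstacle will be the $\chi_2$ step: one has to justify that a single pair of antennas at opposite diagonal corners simultaneously optimizes the \emph{signed sum} of contributions across all $L$ clusters, so that the joint extremum over $m$ can be written as one scalar $b$ rather than as separate per-cluster extrema. This requires a careful worst-case alignment argument --- encoded by the absolute value around the sum in~(\ref{expression_b}) --- that remains valid even when the cluster-wise derivatives $\partial\chi_2/\partial\phi_m^{(l)}$ have opposing signs. Once this identification is established, both sub-derivations reduce to the same algebraic step of solving $(1+u)/(1-u)=\gamma^{th}$ for the relevant displacement $u$, which is why the prefactor $\frac{\gamma^{th}-1}{\gamma^{th}+1}$ (equivalently $1-\frac{2}{\gamma^{th}+1}$) appears in both entries of~(\ref{expression_aperture_th}).
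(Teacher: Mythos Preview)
Your proposal is correct and follows essentially the same route as the paper's proof: restrict attention to the diagonal antenna elements, linearize $d_m$ and $\chi_2$ about the array center, identify the extrema at the two diagonal endpoints, solve $(1+u)/(1-u)=\gamma^{th}$ for the relevant displacement, and convert $D$ to aperture via $A=\tfrac{k}{1+k^2}D^2$. The obstacle you flag for $\chi_2$ is resolved in the paper by exactly the linearization you describe: since each $\Delta\phi_m^{(l)}$ is linear in the diagonal index $m$ (the paper's equation~(\ref{linear_difference_AoD})), the signed sum $\sum_l(\partial\chi_2/\partial\phi_m^{(l)})\Delta\phi_m^{(l)}$ is itself linear in $m$, so its extrema over $\mathcal{D}$ automatically sit at the two endpoints regardless of the signs of the individual cluster contributions, and the absolute value in~(\ref{expression_b}) is precisely what absorbs that sign.
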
}
	\begin{proof}
		See Appendix~\ref{app_theorem_minimum_array_size}.
	\end{proof}
	\vspace{-0.2cm}
	
	\subsubsection{Impact of Non-Uniform XPD and Pathloss on Transmit Covariance Design}
	As mentioned in Remark~\ref{remark_shorter_XPD_distance_than_power_distance}, when the BS-UE distance is shorter than the derived non-uniform XPD distance $r_U^{th}$, the variations of XPDs and pathloss across BS antennas cannot be ignored, which, however, makes it challenging to design transmit covariance matrix, as discussed below. 
	\begin{lemma}
		\label{lemma_convetional_MIMO_identity}
		In conventional massive MIMO networks with dual-polarized antennas, the XPD and pathloss for different BS antennas can be considered equal. {Therefore, the average channel power gain is the same over different antenna elements, i.e., 
		\begin{align}
			\label{equal}
			\forall n, m,~\mathbb{E}(|\bm{G}^{(ji)}(n,m)|^2)=p^{(ji)},~j,i\in\{V,H\},
		\end{align}
		where $\bm{G}^{(ji)}$ is channel matrix given in (\ref{channel}), and {$p_{j,i}$ is the average channel power gain from one BS antenna in polarization $i$ to one UE antenna in polarization $j$}.
		
		The property in (\ref{equal}) is utilized by existing algorithm to simplify transmit covariance design}, i.e., the optimal transmit covariance that maximizes the ergodic capacity is found to be a scalar matrix, with diagonal elements given by $\frac{1}{2M}$. Here, $M$ represents the number of dual-polarized antennas.
	\end{lemma}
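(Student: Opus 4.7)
The plan is to combine the concavity of the ergodic capacity in $\bm{Q}$ with the statistical symmetries enjoyed by $\bm{G}$ under the equal-XPD and equal-pathloss hypothesis. Concavity follows from the standard fact that $\bm{Q}\mapsto\log\det(\bm{I}_{2N}+\gamma\bm{G}\bm{Q}\bm{G}^\dagger)$ is concave on the positive semidefinite cone, so $C(\bm{Q})$ in (\ref{ergodic_capacity}) is concave. If I can exhibit a group of transformations that preserve $C$, then Jensen's inequality lets me replace any optimizer $\bm{Q}^\star$ by its group-average, which is invariant under the group; characterizing this invariant subspace will pin $\bm{Q}^\star$ down.

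First I would identify the symmetries. Under the lemma's hypothesis, (\ref{pathloss})-(\ref{pathloss_cross}) give $\mathbb{E}\{|g^{(VV)}_{n,m}|^2\}=\mathbb{E}\{|g^{(HH)}_{n,m}|^2\}$ and $\mathbb{E}\{|g^{(VH)}_{n,m}|^2\}=\mathbb{E}\{|g^{(HV)}_{n,m}|^2\}$ independently of $(n,m)$. Together with the zero-mean, zero-cross-correlation, and spatial-uncorrelatedness assumptions in Section~\ref{subsec_channel_model} (and the customary Gaussian assumption), the entries of $\bm{G}$ are mutually independent zero-mean complex Gaussian whose variance depends only on the polarization pair. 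Three right-multiplications $\bm{V}$ then preserve the law of $\bm{G}$, possibly up to an accompanying left-unitary that does not alter $\det(\bm{I}+\gamma\bm{G}\bm{Q}\bm{G}^\dagger)$: (i)~$\bm{V}=\diag(\bm{P}_1,\bm{P}_2)$ for permutations $\bm{P}_1,\bm{P}_2\in S_M$ acting within each transmit-polarization block; (ii)~$\bm{V}=\diag(e^{\mathrm{i}\phi_1},\ldots,e^{\mathrm{i}\phi_{2M}})$, exploiting the circular symmetry of $\mathcal{CN}$; and (iii)~the block-swap $\bm{V}=\bm{E}$ exchanging the $V$ and $H$ transmit blocks, combined with the analogous $V\!\leftrightarrow\! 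H$ swap on the receive side, which is what restores the joint law once $p^{(VV)}=p^{(HH)}$ and $p^{(VH)}=p^{(HV)}$. For each such $\bm{V}$ one checks $C(\bm{V}\bm{Q}\bm{V}^\dagger)=C(\bm{Q})$.

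Next, by concavity the averaged covariance $\bar{\bm{Q}}=\int\bm{V}\bm{Q}^\star\bm{V}^\dagger\,d\mu(\bm{V})$ over the (compact) group generated by (i)-(iii) is again feasible, also optimal, and invariant under every element of the group. Invariance under (ii) forces $\bar{\bm{Q}}$ to be diagonal, since any off-diagonal entry $\bar{Q}_{ij}$ picks up a factor $e^{\mathrm{i}(\phi_j-\phi_i)}$ and averages to zero. Invariance under (i) then forces each $M\times M$ diagonal block of $\bar{\bm{Q}}$ to be a scalar multiple of $\bm{I}_M$, so $\bar{\bm{Q}}=\diag(a\bm{I}_M,c\bm{I}_M)$. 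Finally invariance under (iii) swaps the two blocks and forces $a=c$, and the normalization $\Tr(\bar{\bm{Q}})=1$ fixes $a=\tfrac{1}{2M}$, giving $\bm{Q}^\star=\tfrac{1}{2M}\bm{I}_{2M}$.

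The main obstacle is the third symmetry: the $V\!\leftrightarrow\! H$ transmit-side swap is realized only by combining a right-multiplication (which acts on $\bm{Q}$) with a left-multiplication (which is absorbed into the determinant), so that the jointly-transformed channel is equi-distributed with $\bm{G}$. Without (iii) only (i) and (ii) are available and $a,c$ remain unconstrained. A lesser subtlety is that strict invariance under the continuous phase rotations in (ii) requires a Gaussian (or at least circularly-symmetric) assumption on the channel entries; with only the second-moment data stated in Section~\ref{subsec_channel_model} one recovers the weaker conclusion that the diagonal of $\bm{Q}^\star$ is constant, and an additional majorization argument on eigenvalues would be needed to kill the off-diagonal entries.
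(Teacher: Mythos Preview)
Your argument is correct and takes a genuinely different route from the paper's own proof. The paper does not build the symmetry group you describe; instead it imports the structural result of \cite{Gao_eigenmode_2009} (essentially Lemma~\ref{lemma_unitary} specialized to $\bm{U}_T=\bm{U}_R=\bm{I}$) to conclude directly that the optimal $\bm{Q}$ is diagonal whenever $\bm{G}=\bm{D}+\bm{M}\odot\bm{G}_{iid}$, and then argues---using exactly the equalities (\ref{pathloss})--(\ref{pathloss_cross}) with common $\beta$ and $l$---that the co-polarized entries are identically distributed and the cross-polarized entries are identically distributed, so equal power on all $2M$ antennas is optimal. In short: the paper outsources ``diagonality'' to a cited theorem and then does only the per-polarization symmetry step, whereas you derive both diagonality and block-equality from a single concavity-plus-group-averaging argument.

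What each approach buys: the paper's proof is shorter and sidesteps the circular-symmetry caveat you flagged, because the cited eigenmode result already handles general $\bm{G}_{iid}$ with merely i.i.d.\ zero-mean entries (no Gaussian or phase-rotation invariance needed). Your approach is self-contained and conceptually cleaner---no external lemma---but, as you correctly note, symmetry (ii) requires circular symmetry of the entries, which the paper's model supports (zero-mean NLoS, and in simulation Nakagami amplitude with uniform independent phase) but does not state as a standing hypothesis. Your symmetry (iii), combining a transmit-side swap with a receive-side swap absorbed into the determinant, is precisely the mechanism the paper invokes informally when it says the co- and cross-polarized components are ``identically distributed'' across antennas and polarizations; you have simply made that step explicit and unitary.
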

	\begin{proof}
		See Appendix~\ref{app_lemma_convetional_MIMO_identity}.
	\end{proof}
	
	\begin{remark}
		\label{theorem_conventional_not_apply}
		Unlike conventional dual-polarized massive MIMO, the XPD and pathloss of channels can vary across BS antennas in dual-polarized XL-MIMO systems. {This is because the increased aperture of the BS antenna array leads to enlarged non-uniform XPD distances. Therefore, users are more likely to fall into the XPD-based near-field region, where the XPD and pathloss variations across array elements cannot be overlooked.
		
		
		Due to the XPD variance, the property in (\ref{equal}) does not hold.} Consequently, the transmit covariance design scheme for the conventional dual-polarized massive MIMO, i.e., scalar matrix, cannot be directly applied\footnote{The channel matrix is not deterministic for the BS, since we assume that instantaneous CSI, which includes small-scale fading effects, is unknown to the BS. {Further, the statistical CSI is assumed to be available to the BS.}}, which makes it \textbf{challenging} to optimize transmit covariance matrices in dual-polarized XL-MIMO systems.
	\end{remark}
	
	To cope with this challenge, we formulate a transmit covariance optimization problem and propose an efficient algorithm accounting for the non-uniform XPD and non-uniform pathloss to solve this problem in the following two sections.
\vspace{0.1cm}
	\section{Transmit Covariance Optimization Problem Formulation and Decomposition}
	\vspace{0.05cm}
	\label{sec_problem_formulation_and_decomposition}
	In this section, we formulate a transmit covariance optimization problem for dual-polarized XL-MIMO networks, which is then decomposed into two sub-problems, i.e., eigenvector optimization subproblem and power allocation subproblem.
%
\vspace{-0.15cm}
	\subsection{Transmit Covariance Optimization Problem Formulation}
	We aim to improve the ergodic capacity in (\ref{ergodic_capacity}) by optimizing transmit covariance matrix $Q$. Formally, the optimization problem is formulated as,
	\begin{subequations}\label{opt_problem_covariance}
	\begin{align}
		\label{obj}
		&\max_{\bm{Q}} \mathbb{E}\{\log_2\det(\bm{I}_{2N}+\gamma \bm{G}\bm{Q}\bm{G}^{\dagger})\},\\
		\label{cons_semi_def}
		s.t.~&\bm{Q}\succeq \bm{0},\\
		\label{cons_max_sum_power}
		&\Tr(\bm{Q})=1,\\
		\label{cons_max_each_power}
		&\bm{Q}(i,i)\le q_0.		
	\end{align}
	\end{subequations}
	Since $\bm{Q}$ represents transmit covariance matrix, it is required to be semi-definite, as indicated in constraint (\ref{cons_semi_def}). Constraint (\ref{cons_max_sum_power}) shows that the sum of the normalized transmit power over all BS antennas, i.e., the sum of the diagonal elements of matrix $\bm{Q}$, should be equal to $1$. Further, constraint (\ref{cons_max_each_power}) is the transmit power limit for each single antenna element at the BS~\cite{Bjornson_small_cell}, with the power limit denoted by $q_0$. {The Shannon capacity upper bound, i.e., SVD precoding with water-filling power allocation, requires perfect knowledge of instantaneous MIMO channels at the transmitter, which, however, is assumed to be unavailable in this paper. Therefore, the Shannon capacity upper bound cannot be directly utilized to solve the covariance matrix optimization problem in (\ref{opt_problem_covariance}).}
	
	\vspace{-0.1cm}
	\subsection{Problem Decomposition}
	Note that the transmit covariance matrix $\bm{Q}$ is Hermitian. Therefore, it can be unitarily diagonalized~\cite{Gao_eigenmode_2009}, i.e., 
	\begin{align}
		\label{diagnalization}
		\bm{Q}=\bm{U}\bm{\Lambda}\bm{U}^{\dagger},
	\end{align}
	where $\bm{U}\in \mathbb{C}^{2M\times 2M}$ is a unitary matrix consisting of the eigenvectors of $\bm{Q}$, and $\bm{\Lambda} \in \mathbb{C}^{2M\times 2M}$ is a diagonal matrix. We would like to point out $\bm{\Lambda}$ can also be interpreted as power allocation over the BS antennas, which will be further explained in Section~\ref{subsubsection_unitary}. Motivated by (\ref{diagnalization}), the transmit covariance optimization problem in (\ref{opt_problem_covariance}) can be decomposed into the following two subproblems.
	\subsubsection{Eigenvector optimization}
	Given the power allocation matrix $\bm{\Lambda}$, the subproblem for optimizing the unitary eigenvectors-based matrix $\bm{U}$ can be written as
	\begin{subequations}\label{opt_problem_eigenvector}
		\begin{align}
			\label{obj_eigenvector}
			&\max_{\bm{U}} \mathbb{E}\{\log_2\det(\bm{I}_{2N}+\gamma \bm{G}\bm{U}\bm{\Lambda}\bm{U}^{\dagger}\bm{G}^{\dagger})\},\\
			\label{cons_semi_def_eigenvector}
			s.t.~&\bm{U}\bm{\Lambda}\bm{U}^{\dagger}\succeq \bm{0},\\
			\label{cons_max_sum_power_eigenvector}
			&\Tr(\bm{U}\bm{\Lambda}\bm{U}^{\dagger})=1,\\
			\label{cons_max_each_power_eigenvector}
			&\bm{U}\bm{\Lambda}\bm{U}^{\dagger}(i,i)\le q_0.		
		\end{align}
	\end{subequations}
	\subsubsection{Power allocation}
	Given the optimal unitary matrix $\bm{U}^*$, the power allocation subproblem can be written as
	\begin{subequations}\label{opt_problem_power_v0}
		\begin{align}
			\label{obj_power_v0}
			&\max_{\bm{\Lambda}} \mathbb{E}\{\log_2\det(\bm{I}_{2N}+\gamma \bm{G}\bm{U}^*\bm{\Lambda}(\bm{U}^*)^{\dagger}\bm{G}^{\dagger})\},\\
			\label{cons_semi_def_power_v0}
			s.t.~&\bm{U}^*\bm{\Lambda}(\bm{U}^*)^{\dagger}\succeq \bm{0},\\
			\label{cons_max_sum_power_power_v0}
			&\Tr(\bm{U}^*\bm{\Lambda}(\bm{U}^*)^{\dagger})=1,\\
			\label{cons_max_each_power_power_v0}
			&\bm{U}^*\bm{\Lambda}(\bm{U}^*)^{\dagger}(i,i)\le q_0.		
		\end{align}
	\end{subequations}
\section{Transmit Covariance Optimization Algorithm Design}
\label{sec_algorithm_design}
In this section, we first solve the eigenvector optimization problem in (\ref{opt_problem_eigenvector}). Then, by substituting the derived optimal unitary matrix into the power allocation problem in (\ref{opt_problem_power_v0}), we propose a power allocation algorithm to maximize the ergodic capacity, which accounts for the non-uniform XPD and pathloss. On this basis, an overall transmit covariance optimization algorithm is introduced.
\vspace{-.2cm}
	\subsection{Eigenvector Design}
	\label{subsubsection_unitary}
	The eigenvector design problem is well solvable for MIMO channels with a specific structure, as indicated in the following Lemma~\ref{lemma_unitary}.
	\begin{lemma}
		\label{lemma_unitary}
		When MIMO channels satisfy,
		\begin{align}
			\label{channel_model_rewrite}
			\bm{G}=\bm{U}_R(\bm{D}+\bm{M}\odot\bm{G}_{iid})\bm{U}_T^\dagger,
		\end{align}
		the optimal unitary matrix is $\bm{U}^*=\bm{U}_T$ without the per-element power constraint (\ref{cons_max_each_power_eigenvector}). 
		
		Here, $\bm{U}_T$ and $\bm{U}_R$ are $2M\times 2M$ and $2N\times 2N$ unitary matrices, $\bm{D}$ is a $2N\times 2M$ deterministic matrix with at most one nonzero element in each row and each column, $\bm{M}$ is a $2N\times 2M$ deterministic matrix with nonnegative elements, and $\bm{G}_{iid}$ is a $2N\times 2M$ random matrix with elements having zero mean and independent identical distributions~(i.i.d.)~\cite{Gao_eigenmode_2009}.
	\end{lemma}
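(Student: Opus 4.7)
The plan is to substitute (\ref{channel_model_rewrite}) into (\ref{obj_eigenvector}), reduce the problem to a canonical form, and exploit a distributional symmetry of $\tilde{\bm{G}}$ together with the concavity of $\log\det$ to conclude that the optimal transmit covariance is diagonal in the $\bm{U}_T$ basis, i.e., $\bm{U}^{*}=\bm{U}_T$. Substituting $\bm{G}=\bm{U}_R(\bm{D}+\bm{M}\odot\bm{G}_{iid})\bm{U}_T^\dagger$ into (\ref{obj_eigenvector}) and using $\det(\bm{I}+\bm{U}_R\bm{X}\bm{U}_R^\dagger)=\det(\bm{I}+\bm{X})$ strips off the factor $\bm{U}_R$. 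Writing $\tilde{\bm{G}}=\bm{D}+\bm{M}\odot\bm{G}_{iid}$ and $\bm{Q}'=\bm{U}_T^\dagger\bm{U}\bm{\Lambda}\bm{U}^\dagger\bm{U}_T$, the problem reduces to maximizing $f(\bm{Q}')=\mathbb{E}\{\log_2\det(\bm{I}_{2N}+\gamma\tilde{\bm{G}}\bm{Q}'\tilde{\bm{G}}^\dagger)\}$ over Hermitian $\bm{Q}'\succeq\bm{0}$ with $\Tr(\bm{Q}')=1$ (the per-element constraint being dropped by hypothesis); the choice $\bm{U}=\bm{U}_T$ corresponds to the diagonal $\bm{Q}'=\bm{\Lambda}$.

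Next, I would establish the distributional identity $\bm{\Psi}\tilde{\bm{G}}\bm{\Phi}\stackrel{d}{=}\tilde{\bm{G}}$ for every diagonal unitary $\bm{\Phi}=\mathrm{diag}(e^{i\theta_1},\ldots,e^{i\theta_{2M}})$ and a suitably chosen diagonal unitary $\bm{\Psi}$. For each row $i$ carrying the unique nonzero entry $\bm{D}_{i,j(i)}$, I set $\Psi_{ii}=e^{-i\theta_{j(i)}}$; for rows untouched by $\bm{D}$, I set $\Psi_{ii}=1$. The sparsity hypothesis on $\bm{D}$ makes this assignment unambiguous and yields $\bm{\Psi}\bm{D}\bm{\Phi}=\bm{D}$ entrywise. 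For the zero-mean part, circular symmetry of the i.i.d.\ entries of $\bm{G}_{iid}$ ensures $\Psi_{ii}\Phi_{jj}\bm{M}_{ij}G_{iid,ij}\stackrel{d}{=}\bm{M}_{ij}G_{iid,ij}$, and entrywise independence is preserved, so $\bm{\Psi}(\bm{M}\odot\bm{G}_{iid})\bm{\Phi}\stackrel{d}{=}\bm{M}\odot\bm{G}_{iid}$.

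This invariance, combined with $\det(\bm{I}+\gamma\bm{\Psi}\bm{X}\bm{\Psi}^\dagger)=\det(\bm{I}+\gamma\bm{X})$, gives $f(\bm{\Phi}\bm{Q}'\bm{\Phi}^\dagger)=f(\bm{Q}')$ for every diagonal unitary $\bm{\Phi}$. Drawing the $\theta_j$ independently and uniformly on $[0,2\pi)$ and applying Jensen's inequality to the concave map $\bm{Q}'\mapsto\log_2\det(\bm{I}+\gamma\tilde{\bm{G}}\bm{Q}'\tilde{\bm{G}}^\dagger)$ yields $f(\mathrm{diag}(\bm{Q}'))=f(\mathbb{E}_{\bm{\Phi}}[\bm{\Phi}\bm{Q}'\bm{\Phi}^\dagger])\ge\mathbb{E}_{\bm{\Phi}}[f(\bm{\Phi}\bm{Q}'\bm{\Phi}^\dagger)]=f(\bm{Q}')$, so every candidate $\bm{Q}'$ is dominated by its diagonal. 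The maximum of $f$ therefore lies at some diagonal matrix; in the decomposition into eigenvector and power-allocation subproblems this is realized by $\bm{U}=\bm{U}_T$, with the diagonal entries themselves handled by the subsequent power-allocation step, yielding $\bm{U}^{*}=\bm{U}_T$.

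The main obstacle is the invariance construction above: the hypothesis that $\bm{D}$ has at most one nonzero per row and per column is indispensable, since otherwise the required phase $\Psi_{ii}$ would have to satisfy two incompatible conditions $e^{-i\theta_{j_1}}$ and $e^{-i\theta_{j_2}}$ with $j_1\neq j_2$, and the symmetry would fail. Once this invariance is secured, the remaining Jensen--concavity argument is routine.
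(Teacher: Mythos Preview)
The paper does not supply its own proof of this lemma; it is simply quoted from \cite{Gao_eigenmode_2009}. Your argument---reduce via the unitary invariance of $\det$, establish the distributional invariance $\bm{\Psi}\tilde{\bm{G}}\bm{\Phi}\stackrel{d}{=}\tilde{\bm{G}}$ under diagonal unitary conjugation (using the sparsity of $\bm{D}$ to pick compatible phases $\Psi_{ii}$), and then average out the off-diagonal of $\bm{Q}'$ via Jensen and the concavity of $\log\det$---is precisely the technique of that reference, and it is correct.

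One small point worth making explicit: your step $\Psi_{ii}\Phi_{jj}\bm{M}_{ij}G_{iid,ij}\stackrel{d}{=}\bm{M}_{ij}G_{iid,ij}$ requires the entries of $\bm{G}_{iid}$ to be \emph{circularly symmetric}, not merely zero-mean i.i.d.\ as the lemma literally states. This is indeed the standing assumption in \cite{Gao_eigenmode_2009} (and is satisfied in the present paper's model, where the phases are uniform on $[0,2\pi)$), but you should state it rather than leave it implicit.
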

	Motivated by Lemma~\ref{lemma_unitary}, we try to show that the considered dual-polarized XL-MIMO channels exhibit the structure described in (\ref{channel_model_rewrite}). Specifically, let
	\begin{align}
		\bm{U}_R=\bm{I}_{2N},\\
		\bm{U}_T=\bm{I}_{2M},\\
		\label{def_D}
		\bm{D}=\bm{0}_{2N\times 2M}.
	\end{align}
Further, by organizing the pathloss and small-scale fading, we can obtain matrix $\bm{M}$ and $\bm{G}_{iid}$, respectively, i.e.,
	\begin{align}
		\label{def_M}
		\bm{M}=\left[
		\begin{matrix}
			\bm{M}^{(VV)} & \bm{M}^{(VH)}  \\
			\bm{M}^{(HV)} & \bm{M}^{(HH)}
		\end{matrix}
		\right],\\
		\bm{G}_{iid}=\left[
		\begin{matrix}
			\bm{G}_{iid}^{(VV)} & \bm{G}_{iid}^{(VH)}  \\
			\bm{G}_{iid}^{(HV)} & \bm{G}_{iid}^{(HH)}
		\end{matrix}
		\right],
	\end{align}
	where, $\forall i,j\in\{V,H\}$, the $(n,m)$-th element of the matrix $\bm{M}^{(ji)}\in\mathbb{C}^{N\times M}$ is the square root of channel power gain from the $m$-th BS antenna in polarization $i$ to the $n$-th UE antenna in polarization $j$, i.e., $\sqrt{\mathbb{E}(|g^{(ji)}_{n,m}|^2)}$ while the $(n,m)$-th element of the matrix $\bm{G}_{iid}^{(ji)}$ is normalized channel $g^{(ji)}_{n,m}/\sqrt{\mathbb{E}(|g^{(ji)}_{n,m}|^2)}$. 
	
	Therefore, based on Lemma~\ref{lemma_unitary}, we can set the unitary matrix $\bm{U}$ equal to $\bm{U}_T$, i.e.,
	\begin{align}
		\label{optimal_unitary}
		(\bm{U})^*=\bm{U}_T=\bm{I}_{2M},
	\end{align}
	and the additional per-element power constraint in (\ref{cons_max_each_power_eigenvector}) can be tackled by selecting a proper diagonal matrix $\bm{\Lambda}$, as shown in Section~\ref{subsection_power_allocation}.
	
%
	
	Since the optimal unitary matrix is an identity matrix as indicated in (\ref{optimal_unitary}), we have $\bm{Q}=\bm{\Lambda}$. Note that the diagonal elements of covariance matrix $\bm{Q}$ correspond to the transmit power of the BS antennas. Therefore, $\bm{\Lambda}$ can be interpreted as a power allocation matrix, which will be optimized in the following section.

\vspace{-0.2cm}
	\subsection{Power Allocation Design}
	\label{subsection_power_allocation}
	By substituting (\ref{optimal_unitary}) into (\ref{opt_problem_power_v0}), the power allocation subproblem can be rewritten as
	\begin{subequations}\label{opt_problem_power}
		\begin{align}
			\label{obj_power}
			&\max_{\bm{\Lambda}} \mathbb{E}\{\log_2\det(\bm{I}_{2N}+\gamma \bm{G}\bm{\Lambda}\bm{G}^{\dagger})\},\\
			\label{cons_semi_def_power}
			s.t.~&\bm{\Lambda}\succeq \bm{0},\\
			\label{cons_max_sum_power_power}
			&\Tr(\bm{\Lambda})=1,\\
			\label{cons_max_each_power_power}
			&\bm{\Lambda}(m,m)\le q_0.		
		\end{align}
	\end{subequations}
	For simplicity, we rewrite the power allocation problem (\ref{opt_problem_power}) by uncovering the diagonal elements of power allocation $\bm{\Lambda}=\diag\{\lambda_1^{(V)},\dots,\lambda_{M}^{(V)},\lambda_1^{(H)},\dots,\lambda_{M}^{(H)}\}$, i.e., 
	\begin{subequations}\label{opt_problem_power_v2}
		\begin{align}
			\label{obj_power_v2}
			\max_{\{\lambda_{m}^{(V)},\lambda_m^{(H)}\}}&\mathbb{E}\{\log_2\det(\bm{I}_{2N}+\gamma \bm{G}\bm{\Lambda}\bm{G}^{\dagger})\},\\
			\label{cons_semi_def_power_v2}
			s.t.~&\lambda_{m}^{(V)},\lambda_m^{(H)} \ge 0,\\
			\label{cons_max_sum_power_power_v2}
			&\sum_m(\lambda_{m}^{(V)}+\lambda_m^{(H)})=1,\\
			\label{cons_max_each_power_power_v2}
			&\lambda_{m}^{(V)},\lambda_m^{(H)} \le q_0,	
		\end{align}
	\end{subequations}
	where $\lambda_m^{(V)}$ and $\lambda_m^{(H)}$ represent the transmit power of the $m$-th BS antenna in polarization $V$ and polarization $H$, respectively. Due to the expectation operation in the objective function, it is non-trivial to solve problem (\ref{opt_problem_power_v2}). 
	
	To cope with this issue, in the following, we first propose a tight and tractable upper bound using permanents\footnote{The existing permanent-based power allocation algorithm in~\cite{Gao_eigenmode_2009} cannot be directly applied to dual-polarized XL-MIMO systems. This is because the complexity of calculating the objective function, which includes permanent operations, is unacceptable due to the large number of BS antennas in dual-polarized XL-MIMO systems, as shown in Remark~\ref{remark_permanent_complexity}. To cope with this issue, we propose a method to simplify the calculation of the objective function as shown in Section~\ref{subsubsection_simplification_of_calculation}, which serves as the main contribution of the proposed algorithm compared with the existing algorithm in~\cite{Gao_eigenmode_2009}.}. Then a sub-array-based method is utilized to simplify the calculation of the capacity bound. Afterwards, a gradient-descent-based approach is proposed to solve the power allocation problem. 
	\subsubsection{\textit{Closed-form capacity upper bound using permanent}}
	\label{upper_bound_permanent}
	Before presenting the capacity upper bound, we first introduce the definition of the permanent operation, as the sequel.
	\begin{definition}
		For an $X\times Y$ matrix $\bm{A}$, the permanent is defined as~\cite{Gao_eigenmode_2009}
		\begin{align}
			\label{def_per}
			\permanent({\bm{A}})=\left\{
			\begin{aligned}
				&\sum_{\hat{\bm{\xi}}_X\in\bm{\Xi}_Y^X}\prod_{x=1}^X a_{x,\xi_x},~X\le Y,\\
				&\sum_{\hat{\bm{\xi}}_Y\in\bm{\Xi}_X^Y}\prod_{y=1}^Y a_{\xi_y,y},~X>Y,
			\end{aligned}
			\right.
		\end{align}
		where $\bm{\Xi}_Y^X$ denotes the set of all size-$X$ permutations of the numbers $\{1,2,\dots,Y\}$. The notation $\hat{\bm{\xi}}_X\in\bm{\Xi}_Y^X$ means that $\hat{\bm{\xi}}_X=(\xi_1,\xi_2,\dots,\xi_X)$, where $\xi_x\in\{1,2,\dots,Y\}$ and $\xi_{x_1}\neq\xi_{x_2}$. Further, $a_{x,\xi_x}$ denotes the $(x,\xi_x)$-th element of the matrix $\bm{A}$. The definitions of $\bm{\Xi}_X^Y$, $\hat{\bm{\xi}}_Y$, and $a_{\xi_y,y}$ are similar, and thus are omitted due to space limitations.
	\end{definition} 
	Based on the Jensen's inequality and permanents, a tight and closed-form upper bound can be derived for the ergodic capacity, as shown in the following theorem.
	\begin{theorem}
		The ergodic capacity can be upper bounded as~\cite{Gao_eigenmode_2009}
		\begin{align}
			\label{upper_bound}
			C\le \log_2\bigg(\permanent([\bm{I}_{2N},\gamma\bm{\Omega}\bm{\Lambda}])\bigg)\triangleq C^{ub}.
		\end{align}
		Here, $\bm{\Omega}=\bm{D}\odot \bm{D}+\bm{M}\odot\bm{M}$, where $\bm{D}$ and $\bm{M}$ are channel parameters defined in Section~\ref{subsubsection_unitary}.
	\end{theorem}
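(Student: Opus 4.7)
The plan is to establish the inequality in two moves: first convert the expectation of the log-determinant into a log of an expected determinant via Jensen's inequality, then evaluate the expected determinant in closed form and identify it as the stated permanent. Since $\log_2$ is concave, Jensen's inequality yields
\[
\mathbb{E}\{\log_2\det(\bm{I}_{2N}+\gamma\bm{G}\bm{\Lambda}\bm{G}^{\dagger})\} \le \log_2\mathbb{E}\{\det(\bm{I}_{2N}+\gamma\bm{G}\bm{\Lambda}\bm{G}^{\dagger})\},
\]
so the task reduces to showing that the expected determinant equals $\permanent([\bm{I}_{2N},\gamma\bm{\Omega}\bm{\Lambda}])$.

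For the expected determinant, I would apply the principal-minor expansion $\det(\bm{I}_{2N}+\bm{X})=\sum_{S\subseteq[2N]}\det(\bm{X}_{S,S})$ to $\bm{X}=\gamma\bm{G}\bm{\Lambda}\bm{G}^{\dagger}$, and then invoke the Cauchy-Binet formula together with the diagonal structure of $\bm{\Lambda}$ to arrive at
\[
\det(\bm{X}_{S,S}) = \gamma^{|S|}\sum_{T\subseteq[2M],\,|T|=|S|}\Bigl(\prod_{t\in T}\lambda_t\Bigr)|\det(\bm{G}_{S,T})|^2.
\]
From Section~\ref{subsubsection_unitary}, I have $\bm{D}=\bm{0}_{2N\times 2M}$ with $\bm{U}_R,\bm{U}_T$ both identities, so the channel in (\ref{channel_model_rewrite}) reduces to $\bm{G}=\bm{M}\odot\bm{G}_{iid}$ with independent zero-mean entries whose second-moment matrix is exactly $\bm{\Omega}=\bm{M}\odot\bm{M}$.

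The central ingredient is the classical identity $\mathbb{E}\{|\det(\bm{G}_{S,T})|^2\}=\permanent(\bm{\Omega}_{S,T})$, which I would establish by writing both $\det(\bm{G}_{S,T})$ and its conjugate via Leibniz and observing that independence together with zero mean force only pairings in which the two permutations agree on every index to survive, leaving a sign-free permanent sum. Substituting this into the previous expression gives
\[
\mathbb{E}\{\det(\bm{I}_{2N}+\gamma\bm{G}\bm{\Lambda}\bm{G}^{\dagger})\} = \sum_{S\subseteq[2N]}\gamma^{|S|}\sum_{\substack{T\subseteq[2M]\\|T|=|S|}}\Bigl(\prod_{t\in T}\lambda_t\Bigr)\permanent(\bm{\Omega}_{S,T}),
\]
and the final step is to recognize this double sum as the rectangular permanent defined in (\ref{def_per}). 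Concretely, I would partition each length-$2N$ permutation $\hat{\bm{\xi}}_{2N}\in\bm{\Xi}_{2N+2M}^{2N}$ by the row-subset $S$ whose entries $\xi_i$ fall in the $\gamma\bm{\Omega}\bm{\Lambda}$ block: outside $S$ the identity block forces $\xi_i=i$, while inside $S$ the distinct column choices in the second block range over all bijections from $S$ onto some $T\subseteq[2M]$ of size $|S|$, reproducing exactly the factor $\gamma^{|S|}\bigl(\prod_{t\in T}\lambda_t\bigr)\permanent(\bm{\Omega}_{S,T})$.

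The most delicate step is this last combinatorial identification, since it requires careful bookkeeping of the diagonal entries of $\bm{I}_{2N}$ and of $\bm{\Lambda}$ to match both the powers of $\gamma$ and the factor $\prod_{t\in T}\lambda_t$, and to confirm that degenerate cases (e.g.\ $S=\emptyset$ giving the additive unit term) are absorbed cleanly. Jensen's inequality and Cauchy-Binet are standard, and the second-moment identity for independent zero-mean matrices is classical, so the only real effort lies in verifying the rectangular-permanent accounting.
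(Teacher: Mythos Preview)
Your proposal is correct and follows the same approach the paper indicates: the paper does not give a detailed proof but simply attributes the bound to \cite{Gao_eigenmode_2009} and notes that it follows from Jensen's inequality combined with a permanent representation, which is precisely the two-step route you outline. Your Cauchy--Binet/principal-minor expansion and the second-moment identity $\mathbb{E}\{|\det(\bm{G}_{S,T})|^2\}=\permanent(\bm{\Omega}_{S,T})$ for independent zero-mean entries are exactly the ingredients behind that cited result, and your combinatorial identification with the rectangular permanent in (\ref{def_per}) is handled correctly.
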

	Therefore, the power allocation problem can be further transformed into
	\begin{subequations}\label{opt_problem_power_v2dot5}
		\begin{align}
			\label{obj_power_v2dot5}
			\max_{\{\lambda_{m}^{(V)},\lambda_m^{(H)}\}}&C^{ub}=\log_2\bigg(\permanent([\bm{I}_{2N},\gamma\bm{\Omega}\bm{\Lambda}])\bigg),\\
			\label{cons_semi_def_power_v2dot5}
			s.t.~&\lambda_{m}^{(V)},\lambda_m^{(H)} \ge 0,\\
			\label{cons_max_sum_power_power_v2dot5}
			&\sum_m(\lambda_{m}^{(V)}+\lambda_m^{(H)})=1,\\
			\label{cons_max_each_power_power_v2dot5}
			&\lambda_{m}^{(V)},\lambda_m^{(H)} \le q_0.		
		\end{align}
	\end{subequations}
	
	\begin{remark}
		\label{remark_permanent_complexity}
		According to the definition of permanent in (\ref{def_per}), the number $\mathbb{N}_C^{ori}$ of  multiplications required for one calculation of the capacity upper bound $C^{ub}$ is
		\begin{align}
			\label{comlexity_direct}
			\mathbb{N}_C^{ori}=\frac{(2M+2N)!(2N-1)}{2M!},
		\end{align}
		where $M$ and $N$ represent the number of BS antennas and UE antennas, respectively.

		Therefore, the complexity for calculating the capacity upper bound $C^{ub}$ is \textit{unacceptable}, due to the large number of BS antenna elements in the considered dual-polarized XL-MIMO system. Consequently, it is \textit{challenging} to solve the optimization problem in (\ref{opt_problem_power_v2dot5}).
	\end{remark}
	As an example, we assume that the BS is equipped with $M=50$ dual-polarized antennas and the UE has $N=2$ dual-polarized antennas. Then, more than $10^8$ multiplications are required for one single calculation of the objective function.
	\subsubsection{\textit{Simplification of the calculation of capacity bound}}
	\label{subsubsection_simplification_of_calculation}
	To simplify the calculation of the capacity bound, we propose a sub-array-based method. The main idea is to divide the entire antenna array at the BS into several sub-arrays. Since the aperture of each sub-array is much smaller than that of the overall antenna array, the XPD and pathloss corresponding to different BS antennas within each sub-array can be regarded as the same, and thus the transmit power should be equally allocated within each sub-array, which can be utilized to simplify calculations of the objective function. In the following, we will elaborate on the sub-array-based method and the corresponding computational complexity. 
	
	Specifically, the permanent in the objective function can be rewritten as~\cite{Gao_eigenmode_2009}
	\begin{align}
		\label{extended_permanent}
		\permanent([\bm{I}_{2N},\gamma\bm{\Omega}\bm{\Lambda}])=\sum_{k=0}^{2N} \sum_{\bm{\xi}_k\in\Xi^{(k)}_{2M}}\permanent\left(\left(\gamma\bm{\Omega}\bm{\Lambda}\right)_{\bm{\xi}_k}\right),
	\end{align}
	where $\Xi^{(k)}_{2M}$ is the set of all ordered length-$k$ subsets of the numbers $\{1,2,\dots,2M\}$. By the notation $\bm{\xi}_k\in\Xi^{(k)}_{2M}$, we mean that $\bm{\xi}_k=(x_1,x_2,\dots,x_{k})$, where $x_m\in \{1,2,\dots,2M\}$ for $1\le m\le k$, and $x_1<x_2<\cdots<x_{k}$. Besides, $(\bm{X})_{\bm{\xi}_k}$ indicates the sub-matrix of $\bm{X}$ obtained by selecting the columns indexed by $\bm{\xi}_k$. In addition, we define $\permanent\left(\left(\gamma\bm{\Omega}\bm{\Lambda}\right)_{\bm{\xi}_k}\right)=1$ when $k=0$.
	
	Motivated by (\ref{extended_permanent}), the calculation of the permanent can be simplified by reformulating the matrix $\gamma\bm{\Omega}\bm{\Lambda}$ and the set $\Xi^{(k)}_{2M}$. First, the matrix $\gamma\bm{\Omega}\bm{\Lambda}$ is reformulated. Denote the number of sub-arrays and the number of dual-polarized antenna elements within each sub-array as $S$ and $M_0$, respectively, and thus we have $S\times M_0=M$. By substituting the definitions of $\bm{M}$ and $\bm{D}$ in (\ref{def_M}) and (\ref{def_D}) into $\gamma\bm{\Omega}\bm{\Lambda}$, we have
	\begin{align}\label{matrix_decompose}
		\gamma\bm{\Omega}\bm{\Lambda}
		=\left[
		\begin{matrix}
			\bm{A}_{1}^{(V)} &\dots& \bm{A}_{S}^{(V)}&  \bm{A}_{1}^{(H)} &\dots& \bm{A}_{S}^{(H)}
		\end{matrix}
		\right],
	\end{align}
	where, $\forall j\in\{V,H\}$, the definition of $\bm{A}_{s}^{(j)}$ can be given by
	\begin{align}\label{submatrix}
		\bm{A}_{s}^{(j)}
		=\left[
		\begin{matrix}
			\gamma\beta_{(s-1)M_0+1}^{(Vj)}\lambda_{(s-1)M_0+1} &\dots&\gamma\beta_{sM_0}^{(Vj)}\lambda_{sM_0}\\ \vdots&&\vdots\\
			\gamma\beta_{(s-1)M_0+1}^{(Vj)}\lambda_{(s-1)M_0+1} &\dots&\gamma\beta_{sM_0}^{(Vj)}\lambda_{sM_0}\\
			\gamma\beta_{(s-1)M_0+1}^{(Hj)}\lambda_{(s-1)M_0+1} &\dots&\gamma\beta_{sM_0}^{(Hj)}\lambda_{sM_0}\\
			\vdots&&\vdots\\
			\gamma\beta_{(s-1)M_0+1}^{(Hj)}\lambda_{(s-1)M_0+1} &\dots&\gamma\beta_{sM_0}^{(Hj)}\lambda_{sM_0}
		\end{matrix}
		\right].
	\end{align}
	Here, antenna elements $\{(s-1)M_0+1,\dots,sM_0\}$ belong to the $s$-th sub-array. Due to the small aperture of the $s$-th sub-array, the XPD and pathloss are approximately the same among the antenna elements within the sub-array, i.e., for any $m_1,m_2\in \{(s-1)M_0+1,\dots,sM_0\}$,
	\begin{align}
		\label{XPD_equal}
		XPD_{m_1}&\approx XPD_{m_2},\\
		\label{pathloss_equal}
		\beta_0d_{m_1}^{-\alpha}&\approx \beta_0d_{m_2}^{-\alpha}.
	\end{align}
	Therefore, the transmit power should be equally allocated among the BS antennas within the same sub-array, i.e.,
	\begin{align}
		\label{power_equal}
		\lambda_{m_1}^{(j)}=\lambda_{m_2}^{(j)},~j\in\{V, H\}.
	\end{align}
	Based on (\ref{XPD_equal})-(\ref{power_equal}), we can find that the columns of matrix $\bm{A}_{s}^{(j)}$ are the same. The above discussions can be summarized as the following remark.
	\begin{lemma}
		\label{lemma_same_column}
	The matrix $\gamma\bm{\Omega}\bm{\Lambda}$ can be rewritten as a collection of $2S$ submatrices as shown in (\ref{submatrix}), where for each submatrix $\bm{A}_{s}^{(j)}$, all its columns are the same.
	\end{lemma}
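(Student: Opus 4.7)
The plan is to derive the decomposition of $\gamma\bm{\Omega}\bm{\Lambda}$ directly from the definitions and then read off the column-equality claim from the within-sub-array approximations. First, I would substitute the definitions of $\bm{D}$ and $\bm{M}$ from (\ref{def_D}) and (\ref{def_M}) into $\bm{\Omega}=\bm{D}\odot\bm{D}+\bm{M}\odot\bm{M}$. Since $\bm{D}=\bm{0}_{2N\times 2M}$, this collapses to $\bm{\Omega}=\bm{M}\odot\bm{M}$, whose $(n,m)$-th entry in the $(j,i)$-block is precisely the average channel power gain $\beta_m^{(ji)}\triangleq\mathbb{E}(|g^{(ji)}_{n,m}|^2)$ — a quantity independent of the receive index $n$ per the model of Section~\ref{subsec_channel_model}. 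Right-multiplication by the diagonal matrix $\bm{\Lambda}=\diag\{\lambda_1^{(V)},\ldots,\lambda_M^{(V)},\lambda_1^{(H)},\ldots,\lambda_M^{(H)}\}$ then scales the $m$-th column of $\bm{\Omega}$ by the corresponding $\lambda_m^{(j)}$.

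Next, I would partition the $2M$ columns of $\gamma\bm{\Omega}\bm{\Lambda}$ into $2S$ contiguous blocks of $M_0$ columns each, grouping columns first by transmit polarization $j\in\{V,H\}$ and then by sub-array index $s\in\{1,\ldots,S\}$. This partition produces exactly the block expression in (\ref{matrix_decompose}), with each block $\bm{A}_s^{(j)}$ defined as in (\ref{submatrix}) after rearranging rows by receive polarization. No new computation is needed for this step — it is just a bookkeeping rearrangement of the already-established entrywise formula.

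Finally, I would establish the column-equality statement inside each $\bm{A}_s^{(j)}$. Fix any sub-array index $s$ and polarization $j$, and take two column indices $m_1,m_2\in\{(s-1)M_0+1,\ldots,sM_0\}$. By the small-aperture assumption within a sub-array, (\ref{XPD_equal}) gives $XPD_{m_1}\approx XPD_{m_2}$ and (\ref{pathloss_equal}) gives $\beta_0 d_{m_1}^{-\alpha}\approx\beta_0 d_{m_2}^{-\alpha}$; combining these through (\ref{pathloss}) and (\ref{pathloss_cross}) yields $\beta_{m_1}^{(ji')}\approx\beta_{m_2}^{(ji')}$ for every $i'\in\{V,H\}$. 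The equal-power-allocation relation (\ref{power_equal}) then ensures $\lambda_{m_1}^{(j)}=\lambda_{m_2}^{(j)}$, so each entry of column $m_1$ of $\bm{A}_s^{(j)}$ equals the corresponding entry of column $m_2$. Hence all $M_0$ columns of $\bm{A}_s^{(j)}$ coincide, which is the asserted structure.

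I do not anticipate a serious obstacle: the lemma is essentially a structural repackaging of equations (\ref{def_D})–(\ref{power_equal}). The only subtlety is careful row ordering in (\ref{submatrix}) — the rows must be grouped so that all $N$ $V$-polarized receive antennas precede the $N$ $H$-polarized ones, matching the block form of $\bm{M}$ in (\ref{def_M}) — and making the approximations in (\ref{XPD_equal})–(\ref{pathloss_equal}) explicit so that the conclusion is stated with the same ``$\approx$'' qualifier that governs the sub-array simplification.
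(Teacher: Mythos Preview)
Your proposal is correct and follows essentially the same route as the paper: the paper's argument for Lemma~\ref{lemma_same_column} is precisely the inline discussion surrounding (\ref{matrix_decompose})--(\ref{power_equal}), namely writing out $\gamma\bm{\Omega}\bm{\Lambda}$ blockwise and then invoking the within-sub-array approximations (\ref{XPD_equal})--(\ref{pathloss_equal}) together with the equal-power relation (\ref{power_equal}) to conclude that the columns of each $\bm{A}_s^{(j)}$ coincide. Your additional observation that $\bm{D}=\bm{0}$ forces $\bm{\Omega}=\bm{M}\odot\bm{M}$ and your care about row ordering are fine elaborations, but they do not change the argument's substance.
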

	

	Then, we move on to the reformulation of set $\Xi^{(k)}_{2M}$. Based on (\ref{matrix_decompose}) and Lemma~\ref{lemma_same_column}, we can find that for each given $\bm{\xi}_k$, 
	\begin{align}
		\label{same_value}
\exists \bm{\xi}_{k}^{'}, s.t.,\left(\gamma\bm{\Omega}\bm{\Lambda}\right)_{\bm{\xi}_k}=\left(\gamma\bm{\Omega}\bm{\Lambda}\right)_{\bm{\xi}_{k}^{'}}
	\end{align}
	Therefore, we can approximate $\permanent\left(\left(\gamma\bm{\Omega}\bm{\Lambda}\right)_{\bm{\xi}_{k}^{'}}\right)$ with $\permanent\left(\left(\gamma\bm{\Omega}\bm{\Lambda}\right)_{\bm{\xi}_k}\right)$, without the need of additional calculations. Motivated for this property, in the following, we will try to classify the set $\Xi^{(k)}_{2M}$ into several subsets, where the elements $\bm{\xi}_k$ within the same subset satisfy (\ref{same_value}). 
	
	Define $\bm{b}(\bm{\xi}_k)=\{b_1^{(V)},\dots,b_S^{(V)},b_1^{(H)},\dots,b_S^{(H)}\}$, where $b_s^{(j)}$ indicates how many column vectors of submatrix $\bm{A}_{s}^{(j)}$ defined in (\ref{submatrix}) are included in matrix $\left(\gamma\bm{\Omega}\bm{\Lambda}\right)_{\bm{\xi}_k}$. Therefore, given $k$, there are in total $\frac{(2S+k)!}{(k+1)!(2S)!}$ possible values for $\bm{b}$, where $S$ is the number of sub-arrays. The set $\Xi^{(k)}_{2M}$ can thus be categorized into the same number of subsets accordingly. The $l$-th subset corresponds to the $l$-th value $\bm{b}^{(l)}=\{b_1^{(V),(l)},\dots,b_S^{(V),(l)},b_1^{(H),(l)},\dots,b_S^{(H),(l)}\}$ of $\bm{b}$, and can be given by
	\begin{align}
		\label{def_set}
		\Xi^{(k),(l)}_{2M}=\{\bm{\xi}_k|\bm{\xi}_k\in \Xi^{(k)}_{2M}, \bm{b}(\bm{\xi}_k)=\bm{b}^{(l)}\}.
	\end{align}
	with the cardinality of this subset given by
	\begin{align}
		|\Xi^{(k),(l)}_{2M}|=\prod_s\frac{(M_0!)^2}{(M_0-b_s^{(V),(l)})!b_s^{(V),(l)}!(M_0-b_s^{(H),(l)})!b_s^{(H),(l)}!},
	\end{align}
	Based on (\ref{same_value})-(\ref{def_set}), we have the following lemma.
	\begin{lemma}
		\label{lemma_different_set}
		The set $\Xi^{(k)}_{2M}$ can be categorized into $L$ subsets defined in (\ref{def_set}), where each $\bm{\xi}_k$ and $\bm{\xi}_{k}^{'}$ belonging to $\Xi^{(k),(l)}_{2M}$ satisfy (\ref{same_value}).
	\end{lemma}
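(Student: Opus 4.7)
The plan is to prove the lemma by separately verifying the two claims it bundles together: (i) $\{\Xi^{(k),(l)}_{2M}\}_{l=1}^L$ is a partition of $\Xi^{(k)}_{2M}$, and (ii) every pair of elements within a common block satisfies (\ref{same_value}). Both steps are essentially combinatorial, so I would avoid touching the permanent or the objective function altogether and reason only with the column structure of $\gamma\bm{\Omega}\bm{\Lambda}$ inherited from (\ref{matrix_decompose}).

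For (i), I would use the type map $\bm{\xi}_k\mapsto \bm{b}(\bm{\xi}_k)$ defined in the paragraph preceding (\ref{def_set}). Its range consists of $(2S)$-tuples of non-negative integers that sum to $k$, each entry bounded by $M_0$, enumerating the $L$ admissible types introduced earlier. Since every $\bm{\xi}_k$ has a unique image, the preimages $\Xi^{(k),(l)}_{2M}$ are disjoint non-empty sets whose union exhausts $\Xi^{(k)}_{2M}$, which gives the claimed partition.

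For (ii), I would invoke Lemma~\ref{lemma_same_column} directly. Fix a block $\Xi^{(k),(l)}_{2M}$ and two members $\bm{\xi}_k, \bm{\xi}_k'$. By construction of $\bm{b}$, both tuples select exactly $b_s^{(j),(l)}$ columns from every submatrix $\bm{A}_s^{(j)}$, although possibly at different physical positions inside that submatrix. Lemma~\ref{lemma_same_column} says that all $M_0$ columns of each $\bm{A}_s^{(j)}$ are the same vector, so any choice of $b_s^{(j),(l)}$ of them yields the same ordered block of columns. Combining this with the strict-increasing ordering convention on $\bm{\xi}_k$ and the contiguous indexing of sub-arrays implicit in (\ref{matrix_decompose}), the columns of $(\gamma\bm{\Omega}\bm{\Lambda})_{\bm{\xi}_k}$ and $(\gamma\bm{\Omega}\bm{\Lambda})_{\bm{\xi}_k'}$ appear in the same order and the two extracted matrices coincide as matrices, which is exactly (\ref{same_value}).

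The main obstacle is only a bookkeeping one: identical multisets of columns must translate into identical matrices rather than matrices equal up to column permutation, since (\ref{same_value}) is stated as matrix equality rather than an equality of permanents. This is resolved by invoking the sorted-subset convention built into the definition of $\Xi^{(k)}_{2M}$ together with the block structure of (\ref{matrix_decompose}); once that is pinned down, no symmetry argument on the permanent is needed.
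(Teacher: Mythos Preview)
Your argument is correct and matches the paper's reasoning. The paper does not give a standalone proof of this lemma; it simply states that the lemma follows ``based on (\ref{same_value})--(\ref{def_set}),'' i.e., from the construction of the type map $\bm{b}(\cdot)$ and the subset definition immediately preceding the statement. Your proposal spells out precisely those two ingredients---the partition induced by $\bm{b}$ and the column-equality consequence of Lemma~\ref{lemma_same_column}---and is therefore a faithful, more explicit rendering of the paper's implicit argument. Your observation about the sorted-subset convention and contiguous block indexing being needed for \emph{matrix} equality (rather than mere equality of permanents) is a genuine clarification that the paper glosses over.
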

	Based on (\ref{extended_permanent}), Lemma~\ref{lemma_same_column}, and Lemma~\ref{lemma_different_set}, the calculation of capacity upper bound $C^{ub}$ can be simplified, as shown in the following theorem.
	\begin{theorem}
		\label{the_simplify}
		The capacity upper bound $C^{ub}$ in (\ref{obj_power_v2dot5}) can be written as
		\begin{align}
			C^{ub}=\log_2\left(f(\bm{\Lambda})\right),
		\end{align}
		where $f(\bm{\Lambda})$ is given by
		\begin{align}
			\label{simplify}
			f(\bm{\Lambda})&\triangleq\sum_{k=0}^{2N} \sum_{\bm{\xi}_k\in\Xi^{(k)}_{2M}}\permanent\left(\left(\gamma\bm{\Omega}\bm{\Lambda}\right)_{\bm{\xi}_k}\right)\notag\\
			&=\sum_{k=0}^{2N}\sum_{l=1}^{\frac{(2S-1)!}{(2S-k)!(2k-1)!}} |\Xi^{(k),(l)}_{2M}|\permanent\left(\left(\gamma\bm{\Omega}\bm{\Lambda}\right)_{\bm{\xi}_k^{(l)}}\right),
		\end{align}
		where $\bm{\xi}_k^{(l)}$ is a vector arbitrarily selected from the set $\Xi^{(k),(l)}_{2M}$.
	\end{theorem}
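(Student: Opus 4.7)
The plan is to decompose the proof into three stages that mirror the structure of equations~(\ref{extended_permanent}), Lemma~\ref{lemma_same_column}, and Lemma~\ref{lemma_different_set}. First, I would substitute the permanent expansion in~(\ref{extended_permanent}) directly into the definition $C^{ub}=\log_2\!\big(\permanent([\bm{I}_{2N},\gamma\bm{\Omega}\bm{\Lambda}])\big)$ from~(\ref{upper_bound}). This immediately yields
\begin{equation*}
C^{ub}=\log_2\!\Bigg(\sum_{k=0}^{2N}\sum_{\bm{\xi}_k\in\Xi^{(k)}_{2M}}\permanent\!\big((\gamma\bm{\Omega}\bm{\Lambda})_{\bm{\xi}_k}\big)\Bigg),
\end{equation*}
which matches the first equality of~(\ref{simplify}). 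The whole burden is therefore to reorganize the inner double sum into the subset-indexed form given by the second equality.

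Next, I would invoke Lemma~\ref{lemma_different_set} to partition each $\Xi^{(k)}_{2M}$ into the disjoint subsets $\{\Xi^{(k),(l)}_{2M}\}_l$ defined in~(\ref{def_set}), indexed by the admissible profiles $\bm{b}^{(l)}$. The key observation is that for any two index vectors $\bm{\xi}_k,\bm{\xi}_k'$ living in the same subset $\Xi^{(k),(l)}_{2M}$, the definition of $\bm{b}(\bm{\xi}_k)$ guarantees that from each submatrix $\bm{A}_s^{(j)}$ exactly $b_s^{(j),(l)}$ columns are selected. By Lemma~\ref{lemma_same_column}, all columns of $\bm{A}_s^{(j)}$ are identical, so the two selected column collections coincide up to a permutation. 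Since the permanent is invariant under column (and row) permutations, it follows that $\permanent\!\big((\gamma\bm{\Omega}\bm{\Lambda})_{\bm{\xi}_k}\big)=\permanent\!\big((\gamma\bm{\Omega}\bm{\Lambda})_{\bm{\xi}_k'}\big)$, which is exactly property~(\ref{same_value}).

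With this invariance in hand, the inner sum over $\Xi^{(k),(l)}_{2M}$ collapses to the cardinality $|\Xi^{(k),(l)}_{2M}|$ times the permanent evaluated at any single representative $\bm{\xi}_k^{(l)}$. I would then justify the counting formula for $|\Xi^{(k),(l)}_{2M}|$ by a simple combinatorial argument: for each $(s,j)$, selecting which $b_s^{(j),(l)}$ of the $M_0$ columns of $\bm{A}_s^{(j)}$ are chosen contributes the factor $\binom{M_0}{b_s^{(j),(l)}}=\tfrac{M_0!}{(M_0-b_s^{(j),(l)})!\,b_s^{(j),(l)}!}$, and independence across $(s,j)$ yields the displayed product. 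Summing over $l$ and then $k$ reconstructs the original double sum as the right-hand side of~(\ref{simplify}), completing the proof.

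The main obstacle is conceptual rather than computational: one must carefully verify that the equivalence relation induced by $\bm{b}(\cdot)$ really respects the permanent, i.e., that the column-permutation invariance of the permanent together with Lemma~\ref{lemma_same_column} is strong enough to identify all $\bm{\xi}_k$ within a single profile class. Once that equivalence is established, the rest is a straightforward regrouping of terms and a binomial counting argument, and no further probabilistic or analytic machinery is required.
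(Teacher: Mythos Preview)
Your proposal is correct and follows precisely the route the paper indicates: the paper states that Theorem~\ref{the_simplify} follows directly from~(\ref{extended_permanent}), Lemma~\ref{lemma_same_column}, and Lemma~\ref{lemma_different_set}, and you have filled in those steps (plus the cardinality count) exactly as intended. Nothing further is needed.
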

	\begin{remark}
		Based on (\ref{simplify}), the calculation of the capacity upper bound $C^{ub}$ can be simplified, with the number $\mathbb{N}_C^{sim}$ of multiplications required for one calculation of $C^{ub}$ given by
		\begin{align}
			\label{computation_complexity_simplify}
			\mathbb{N}_C^{sim}=\sum_{k=0}^{2N}(\frac{(k-1)2N!}{(2N-k)!}+1)(\frac{(2S+k)!}{(k+1)!(2S)!}).
		\end{align}
	\end{remark}
	
	The following remark compares the complexity $\mathbb{N}_C^{sim}$ of the simplified method from Theorem~\ref{the_simplify} against the complexity $\mathbb{N}_C^{ori}$ of the definition-based method noted in Remark~\ref{remark_permanent_complexity}.
	\begin{remark}
		\label{remark_complexity_bound}
	(\textbf{Complexity comparison}) To demonstrate the effectiveness of the proposed calculation method in Theorem~\ref{the_simplify}, we provide an upper bound on its complexity $\mathbb{N}_C^{sim}$, i.e.,
	\begin{align}
		\label{upper_bound_computation_complexity}
		\mathbb{N}_C^{sim}\le \frac{(2S+2N)!(2N-1)}{2S!}.
	\end{align}
	
	By comparing (\ref{upper_bound_computation_complexity}) and the complexity $\mathbb{N}_C^{ori}$ in (\ref{comlexity_direct}), we can see that since the number $M$ of BS antennas is much larger than the number $S$ of subarrays, i.e., $M\gg S$, the proposed method can effectively simplify the calculation of the capacity upper bound.
	\end{remark}
	\begin{proof}
		See Appendix~\ref{app_remark_complexity_bound}.
	\end{proof}
	
	\subsubsection{\textit{Power allocation algorithm}}
	Recall that the transmit power is equally allocated among different antenna elements within the same subarray, as indicated in (\ref{power_equal}). Therefore, we can use one single variable $\widetilde{\lambda}_s^{(j)}$ to represent the transmit powers of all antenna elements in polarization $j\in\{V,H\}$ within the $s$-th subarray, i.e.,
	\begin{align}
		\label{power_simplify}
		\lambda_m^{(j)}=\widetilde{\lambda}_s^{(j)}, \forall m\in\{(s-1)M_0+1,\dots,sM_0\},
	\end{align}
	where $M_0$ represents the number antenna elements within one subarray. Based on (\ref{extended_permanent}), (\ref{simplify}), and (\ref{power_simplify}), the power allocation problem in (\ref{opt_problem_power_v2dot5}) can be simplified as
\begin{subequations}\label{opt_problem_power_v3}
	\begin{align}
		\label{obj_power_v3}
		\max_{\{\widetilde{\lambda}_s^{(V)},\widetilde{\lambda}_s^{(H)}\}}&\log_2\left(f(\{\widetilde{\lambda}_s^{(V)},\widetilde{\lambda}_s^{(H)}\})\right),\\
		\label{cons_semi_def_power_v3}
		s.t.~&\widetilde{\lambda}_s^{(V)},\widetilde{\lambda}_s^{(H)} \ge 0,\\
		\label{cons_max_sum_power_power_v3}
		&\sum_s(\widetilde{\lambda}_s^{(V)}+\widetilde{\lambda}_s^{(H)})=\frac{1}{S},\\
		\label{cons_max_each_power_power_v3}
		&\widetilde{\lambda}_s^{(V)},\widetilde{\lambda}_s^{(H)} \le q_0.		
	\end{align}
\end{subequations}
	
	To deal with the constrained optimization problem in (\ref{opt_problem_power_v3}), \textit{the penalty method} is applied~\cite{Boyd_2004,Zeng_coverage_2021}. The penalty method involves transforming the original constrained optimization problem into an unconstrained one by adding penalty terms to the objective function, which penalize violations of the constraints. The penalties are typically weighted by a parameter $\mu$ that controls their influence on the optimization process. As the optimization progresses, the penalties encourage the optimizer to find solutions that satisfy the constraints. 
	
	More specifically, the $o$-th unconstrained optimization problem given the penalty parameter $\mu_o$ can be expressed as
	\begin{align}\label{opt_problem_power_v4}
		\max_{\{\widetilde{\lambda}_s^{(V)},\widetilde{\lambda}_s^{(H)}\}}&\log_2\left(f(\{\widetilde{\lambda}_s^{(V)},\widetilde{\lambda}_s^{(H)}\})\right)+p_E(\mu_o,\{\widetilde{\lambda}_s^{(V)},\widetilde{\lambda}_s^{(H)}\}).		
	\end{align}
	Here, $p_E(\mu_o,\{\widetilde{\lambda}_s^{(V)},\widetilde{\lambda}_s^{(H)}\})$ is the second-order penalty function, whose explicit expression is given in (\ref{penalty_function}) shown at the bottom of this page.
	\begin{figure*}[!hb]
		\rule[-12pt]{17.5cm}{0.05em}
		\begin{equation}
			\setlength{\abovedisplayskip}{3pt}
			\setlength{\belowdisplayskip}{-3pt}
			\begin{aligned}
				\label{penalty_function}
				p_E(\mu_o,\{\widetilde{\lambda}_s^{(V)},\widetilde{\lambda}_s^{(H)}\})=\mu_o\left(\left(\sum_s(\widetilde{\lambda}_s^{(V)}+\widetilde{\lambda}_s^{(H)})-\frac{1}{S}\right)^2+\sum_{j\in\{V,H\}}\sum_s\left(\max\{\widetilde{\lambda}_s^{(j)}-q_0,0\}\right)^2+\left(\max\{\widetilde{-\lambda}_s^{(j)},0\}\right)^2\right),
			\end{aligned}
		\end{equation}
	\end{figure*}
	Then, we utilize the gradient descent method to solve problem (\ref{opt_problem_power_v4}).
\begin{algorithm}[!tpb]{
		\caption{Transmit Covariance Optimization Algorithm}
		\begin{algorithmic}[1]\label{algorithm_covariance}
			\REQUIRE{Statistical channel information $\bm{\Omega}$, transmit SNR $\gamma$, user distance $r_U$, and non-uniform XPD distance $r_U^{th}$}
			\IF{$r_U>r_U^{th}$}
			\STATE{Set a scalar matrix as the optimal transmit covariance matrix, i.e., $\bm{Q}=\frac{1}{2M}\bm{I}_{2M}$.}
			\ELSE
			\STATE{Initialize $\{(\widetilde{\lambda}_s^{(V)})^*_{0},(\widetilde{\lambda}_s^{(H)})^*_{0}\}=\frac{1}{2S}$}
			\REPEAT 
			\STATE Set $\{(\widetilde{\lambda}_s^{(V)})^*_{o-1},(\widetilde{\lambda}_s^{(H)})^*_{o-1}\}$ as the initial solution;
			\STATE Given a penalty parameter $\mu_o$, solve problem (\ref{opt_problem_power_v4}) using gradient descent. Denote the optimal solution by $\{(\widetilde{\lambda}_s^{(V)})^*_{o},(\widetilde{\lambda}_s^{(H)})^*_{o}\}$;
			\STATE Let $\mu_{o+1}>\mu_o$, and $o=o+1$.
			\UNTIL {The objective function in (\ref{obj_power_v3}) converges and the constraints (\ref{cons_semi_def_power_v3})-(\ref{cons_max_each_power_power_v3}) are satisfied.}
			\STATE Acquire the optimal transmit covariance by substituting (\ref{optimal_unitary}) and the derived optimal power allocation into (\ref{diagnalization}).
			\ENDIF
			\ENSURE{Optimal transmit covariance matrix $\bm{Q}$}
	\end{algorithmic}}
\end{algorithm}

\subsection{Overall Algorithm}
Based on the method presented in the previous two subsections, we design a transmit covariance optimization algorithm to solve problem (\ref{opt_problem_covariance}). Specifically, in each iteration, gradient descent is used to solve problem (\ref{opt_problem_power_v4}). In the subsequent iteration, the optimal power allocations generated in the previous iteration are set as the initial solution. Further, a larger penalty parameter $\mu$ is substituted into problem (\ref{opt_problem_power_v4}) to encourage the optimizer to find solutions satisfying the constraints of (\ref{opt_problem_power_v3}). The iterations are completed until the difference in the objective function in (\ref{obj_power_v3}) between two adjacent iterations is less than a predetermined threshold and the constraints of (\ref{opt_problem_power_v3}) are satisfied. Finally, by substituting the derived optimal power allocations and the optimal eigenvectors in (\ref{optimal_unitary}) into (\ref{diagnalization}) can yield the optimal transmit covariance that maximizes the ergodic capacity. The algorithm is summarized in Algorithm~\ref{algorithm_covariance}. {From Algorithm~\ref{algorithm_covariance}, we can see that even given the detailed XPD and pathloss, the derived non-uniform XPD distance is still required for the optimization of the transmit covariance, i.e., they serve as a standard for deciding whether to use the proposed transmit covariance optimization algorithm or the simpler conventional method.}

{Further, we would like to point out that the following two characteristics for XPD-based near-field procedures are utilized in the optimization. First, when the user is within the derived non-uniform XPD distance, the variance of the pathloss and XPD over BS antenna elements cannot be overlooked, which serves as the motivation for simplifying the calculation of capacity bound, as pointed out in Section~\ref{upper_bound_permanent}. Second, when the user is in the XPD-based near-field, if we divide the whole antenna array into several smaller subarrays, such that each subarray is smaller than the non-uniform XPD aperture, the XPD and pathloss corresponding to different antenna elements within the same subarray can be regarded as the same. This characteristic serves as the main idea for simplifying the calculation of the capacity upper bound, as shown in Theorem~\ref{the_simplify} of the paper.}

{
\subsection{Complexity and Convergence Analysis}
Define \( O \) as the total number of iterations in Algorithm~1, i.e., the number of unconstrained optimization problems to be solved. Let \( T_{\max} \) denote the maximum number of iterations required to solve each unconstrained optimization problem using the gradient descent method, and we assume that each iteration contains \( V \) counts of the objective function in (\ref{computation_complexity_simplify}). The computational complexity of Algorithm~1 is then upper bounded by $\mathcal{O}(OT_{max}VM^N)$, where \( \mathbb{N}_C^{\text{sim}} \) denotes the number of multiplications required for one evaluation of the objective function in (\ref{computation_complexity_simplify}), and $M$ and $N$ represent the number of dual-polarized antennas at the BS and the UE, respectively.  

Furthermore, the convergence of the proposed algorithm can be guaranteed, as its framework is fundamentally based on the penalty function method~\cite{J_1999}.}


%
%

\begin{figure}[!tpb]
	\centering
	\vspace{0pt}
	\includegraphics[width=0.33\textwidth]{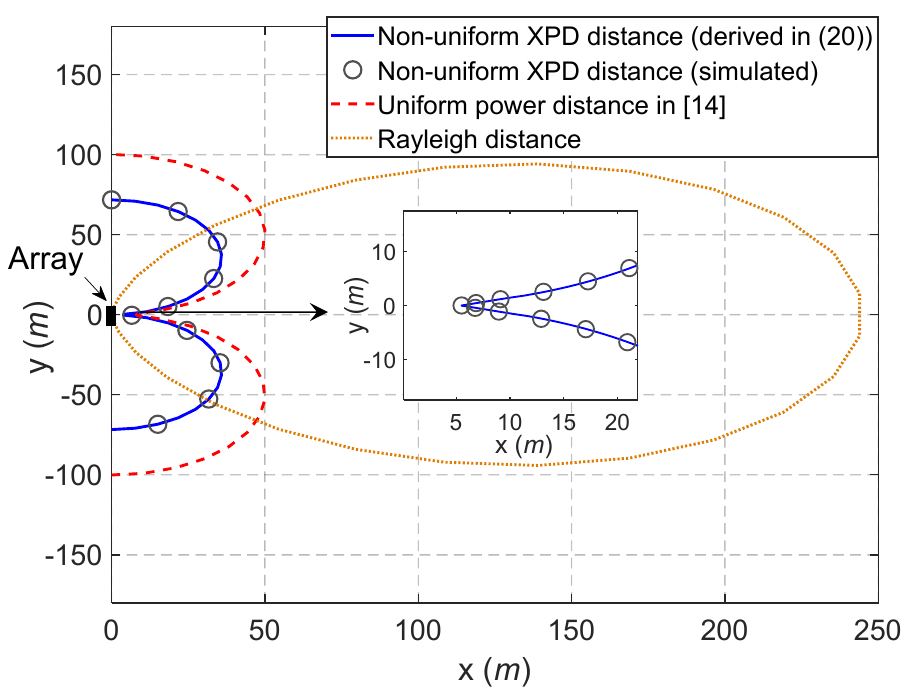}
	\vspace{-0.1cm}
	\caption{Comparison of non-uniform XPD distances $r_U^{th}$ with direction-dependent Rayleigh distances~\cite{Lu_XL_MIMO_ICC}. Further, non-uniform power distances are also considered, characterizing the maximum BS-user distance within which the difference in pathloss (and thus received power) across antenna elements cannot be neglected~\cite{Lu_XL_MIMO_ICC}. Here, the decay exponent of XPD with distance is set to $\eta=1$, and a linear antenna array with $70$ antennas is considered.}
	\vspace{-0.6cm}
	\label{fig_cmp_distance_threshold}
\end{figure}

\begin{figure}[!tpb]
	\centering
	\vspace{0pt}
	\includegraphics[width=0.35\textwidth]{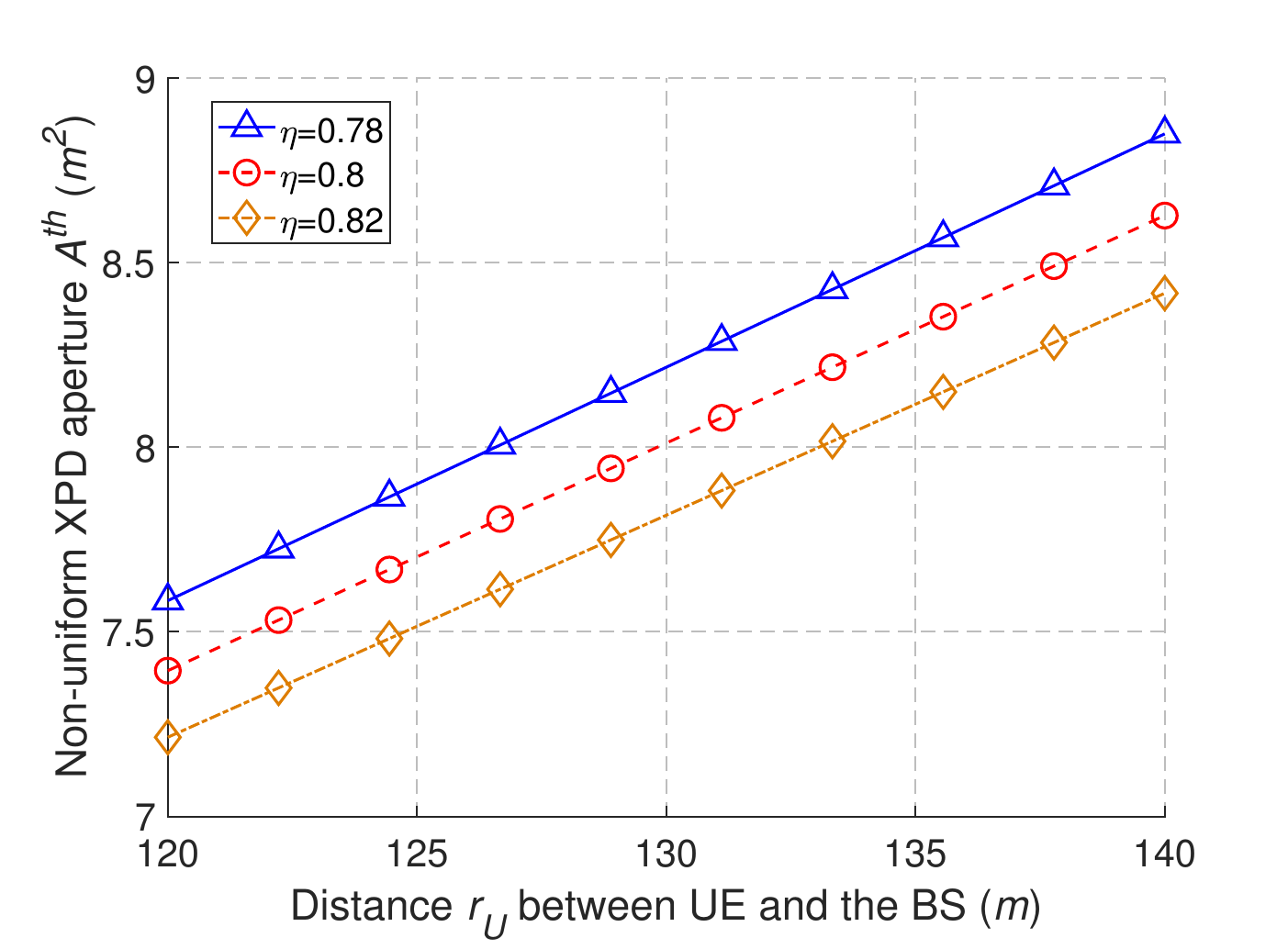}
	\vspace{-0.1cm}
	\caption{Non-uniform XPD aperture $A^{th}$ defined in (\ref{def_non_unifrom_XPD_aperture}) vs. the distance $r_U$ between the UE and the center of the BS antenna array, with elevation and azimuth angles of the user $(\theta_U,\varphi_U)=(\frac{\pi}{6},\frac{\pi}{2})$. The BS antenna array is square, i.e., $k=1$. The thresholds for XPD components $\chi_1$ and $\chi_2$ are given by $\gamma_1^{th}=\gamma_2^{th}=1.1$.}
	\vspace{-0.6cm}
	\label{fig_aperture_vs_r_U}
\end{figure}

\section{Simulation Results}
\label{sec_simulation}
To validate the theoretical analysis and the effectiveness of the proposed transmit covariance optimization algorithm, we evaluate the performance of the dual-polarized XL-MIMO system via simulations. The simulation parameters are set up based on the existing works~\cite{yu_dual_polarized_2022,Yue_IOS_2023,Zhang_IOS_mag}. Specifically, the UE is equipped with $N=2$ dual-polarized antennas. The wavelength is set to $\lambda=10$~cm and we assume half wavelength for antenna spacing. $L=5$ clusters are considered, whose coordinates are given by $(29.3, 0, 6.2)$~m, $(24.6, -4.3, 1.3)$~m, $(39, -9, 0)$~m, $(32.7, -2.9, -2.9)$~m, and $(48.5, -8.7, -8.6)$~m, respectively. The pathloss model is given by $\beta_m=\beta_0d_m^{-\alpha}$~\cite{Zeng_multi_user_RRS}, where the pathloss exponent is assumed to be $\alpha=4$, $d_m$ represents the distance from the $m$-th BS antenna to the UE, $\beta_0$ is pathloss at unit distance.  The XPD under unit Tx-Rx distance is set to $XPD_{d_m=1}=5$~dB, and the decay exponent for XPD with respect to propagation distances is given by $\eta=0.8$. The azimuth spread and truncation spread are assumed to be equal among the clusters, i.e., $\omega^{(l)}=35^\circ$ and $\Delta\varphi^{(l)}=180^\circ$ for all $l$. The thresholds for XPD components $\chi_1$ and $\chi_2$ are given by $\gamma_1^{th}=\gamma_2^{th}=1.05$, respectively. We assume Nakagami-fading for the XL-MIMO channels, with fading parameters $\mu=5$ and $\Omega$ equal to corresponding pathloss $\beta_m^{(ji)}$. Further, the amplitudes and phases of the channels are assumed to be independent, where the channel phases obey uniform distributions within $[0,2\pi)$. The overall transmit power and the variance of received AWGN is set to $43$~dBm and $\sigma^2=-96$~dBm, respectively. The ratio of the maximum transmit power per antenna to the total transmit power is set to $q_0=4\times(2M)^{-1}$.

{In Fig.~\ref{fig_cmp_distance_threshold}, we validate the theoretically derived non-uniform XPD distance $r_U^{th}$ against numerical results.} Further, the non-uniform XPD distance is also compared against existing direction-dependent Rayleigh distances and uniform power distances~\cite{Lu_XL_MIMO_2022}. {Here, the numerical results are acquired by adopting the interior-point method (by calling Matlab function \emph{fmincon}) to solve the optimization problem that defines the non-uniform XPD distance, as indicated in~(\ref{def_non_unifrom_XPD_distance}).} Further, the Rayleigh distances serve as conventional ways of separating near-field and far-field regions while uniform power distances characterize the maximum BS-user distance within which the difference in pathloss (and thus received power) cannot be neglected. {The threshold for the ratio between the maximum and minimum pathloss is set as $1.15$~\cite{Lu_XL_MIMO_ICC,Lu_XL_MIMO_2022}}. {From Fig.~\ref{fig_cmp_distance_threshold}, we can observe that the derived non-uniform XPD distances are consistent with numerical results.} Further, we can find that Rayleigh distances have different shapes compared to non-uniform XPD distances, since Rayleigh distances only take phase difference across array elements into account. Therefore, Rayleigh distances are insufficient to describe the discrepancies in XPD and pathloss in dual-polarized XL-MIMO systems, highlighting the importance of the derived non-uniform XPD distances. {More specifically, Fig.~\ref{fig_cmp_distance_threshold} shows that unlike classic Rayleigh distances, if non-uniform XPD distances are used for a refined near-far field boundary, users with inclined directions are more likely to be located within the near-field region compared to those in the normal direction. Further, it can be observed that non-uniform XPD distances are shorter than uniform power distances, which is consistent with Remark~\ref{remark_shorter_XPD_distance_than_power_distance}.}

{Fig.~\ref{fig_aperture_vs_r_U} shows the influence of the distance $r_U$ from the UE to the center of the BS antenna array on the non-uniform XPD aperture $A^{th}$ given in (\ref{def_non_unifrom_XPD_aperture}). From Fig.~\ref{fig_aperture_vs_r_U}, we can see that $A^{th}$ increases with $r_U$. This is because as the UE moves away from the BS, the discrepancies in BS-UE distances across different BS antennas is less significant. Therefore, to increase such discrepancies, a larger aperture is required so as to achieve the predetermined threshold $\gamma_1^{th}$ given in (\ref{def_non_unifrom_XPD_aperture}). Further, it can be found that the $A^{th}$ decreases with the decay exponent $\eta$ of XPD. This is because as $\eta$ becomes larger, the difference of the XPD between two antenna elements becomes more significant, and thus a smaller aperture is enough to achieve the predetermined XPD threshold.}



Fig.~\ref{fig_capacity_3_cmp} compares the ergodic capacity achieved by the proposed transmit covariance optimization algorithm with that achieved by the existing transmit covariance design scheme for the conventional dual-polarized massive MIMO, i.e., scalar matrix. From Fig.~\ref{fig_capacity_3_cmp}, we can observe that the proposed algorithm always achieve higher capacity than the scalar matrix. This indicates that the conventional scalar-matrix-based transmit covariance design scheme is not optimal for the dual-polarized XL-MIMO system, which verifies Theorem~\ref{theorem_conventional_not_apply}. Fig.~\ref{fig_capacity_3_cmp} also shows that the ergodic capacity achieved by the proposed algorithm increase as the size of sub-array becomes smaller. This is because there are more sub-arrays, leading to higher design degree of freedom. 

\begin{figure}[!tpb]
	\centering
	\vspace{0pt}
	\includegraphics[width=0.34\textwidth]{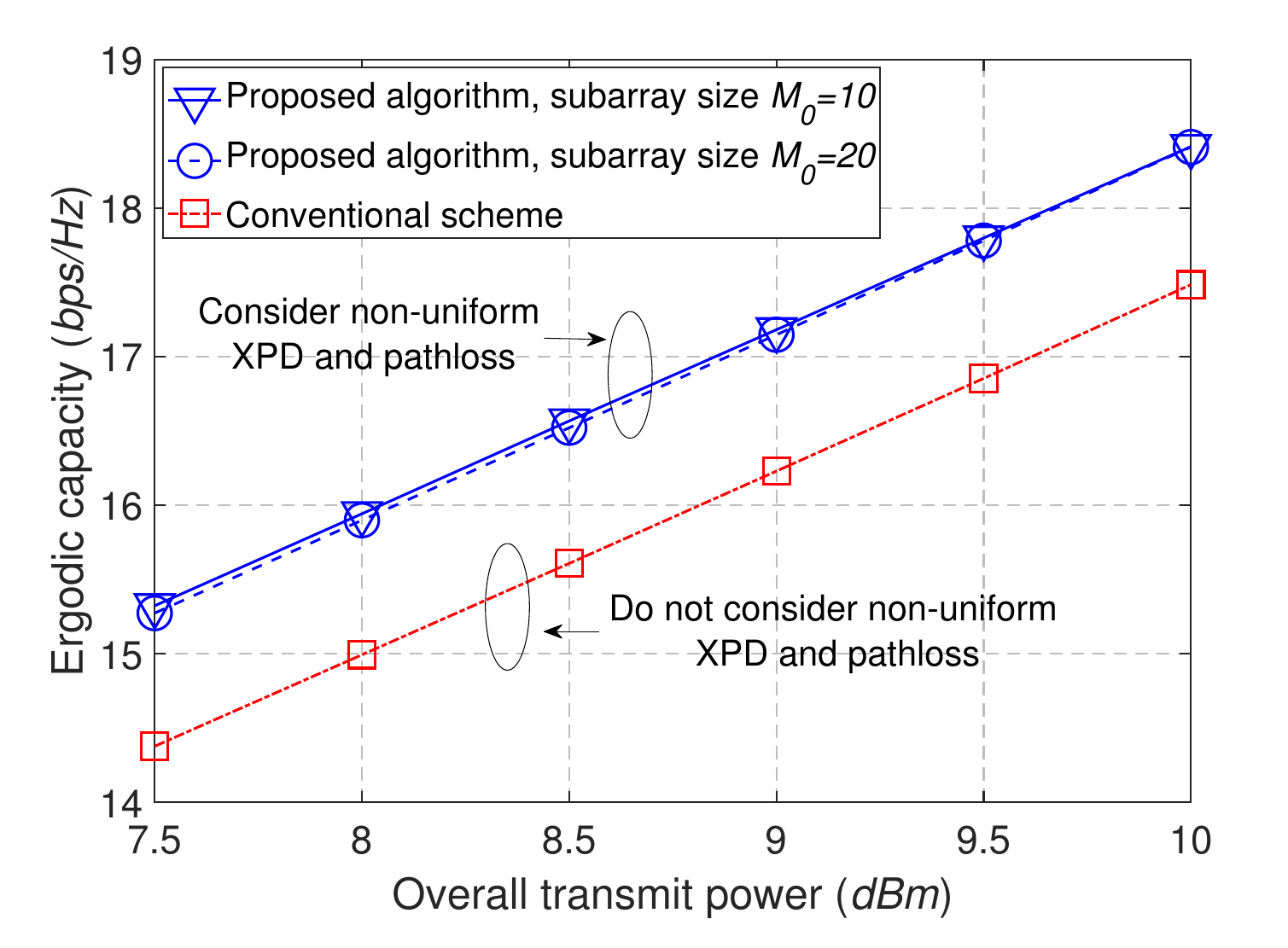}
	\vspace{-0.1cm}
	\caption{Ergodic capacity vs. overall transmit power $P$ of the BS, with the coordinate of the UE given by $(30,0,0)$~m, number of BS antennas $M=80$. A linear antenna array is deployed at the BS. Here, the ``Conventional scheme" refers to the one designed for conventional dual-polarized massive MIMO, which does not consider XPD and pathloss differences across array elements.}
	\vspace{-0.2cm}
	\label{fig_capacity_3_cmp}
\end{figure}

\begin{figure}[!tpb]
	\centering
	\vspace{0pt}
	\includegraphics[width=0.33\textwidth]{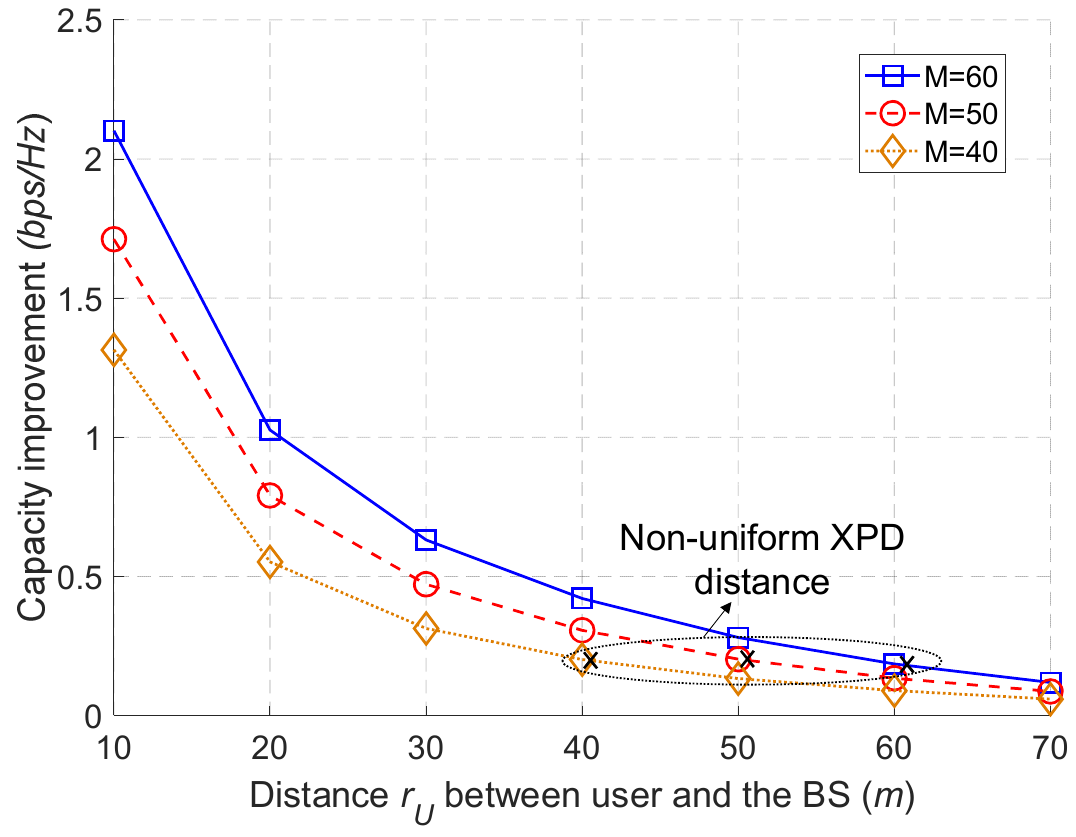}
	\vspace{-0.1cm}
	\caption{Improvement in ergodic capacity achieved by utilizing the non-uniform XPD model compared to the conventional model. Here, the decay exponent of XPD with distance is set to $\eta=1$. The thresholds for XPD components $\chi_1$ and $\chi_2$ are given by $\gamma_1^{th}=\gamma_2^{th}=1.05$. A linear antenna array is considered, and the user is set to be aligned with the antenna array.}
	\vspace{-0.4cm}
	\label{fig_capacity_improvement_vs_distance}
\end{figure}

{Specifically, Fig.~\ref{fig_capacity_improvement_vs_distance} illustrates the impact of the propagation distance $r_U$ on the improvement in ergodic capacity achieved by adopting the non-uniform XPD model, in comparison to the conventional model. From Fig.~\ref{fig_capacity_improvement_vs_distance}, we can find that when the user is relatively close the BS and is within the non-uniform XPD distance, the proposed algorithm achieves notable capacity enhancement compared with conventional methods, which demonstrates the effectiveness of the proposed algorithm. In contrast, for users far away from the BS, the capacity improvement is, however, less significant. This is because the XPD variance over the BS antennas is less significant. Especially, when the propagation distance is larger than the derived non-uniform XPD threshold, the capacity improvement is negligible, which thus shows the validness of the proposed distance criterion.}


{Fig.~\ref{fig_ratio_complexity} demonstrates the ratio of computation complexity between a benchmark and the proposed algorithm. The benchmark algorithm refers to the one where the objective function is directly calculated through the definition of permanent in (\ref{def_per}) without dividing the BS antenna array into subarrays. The other steps of the conventional scheme are the same as those of the proposed algorithm. From Fig.~\ref{fig_ratio_complexity}, we can find that the proposed algorithm can significantly reduce the computational complexity. This is because the calculation of the objective function can be simplified with the proposed sub-array-based method, as indicated in Remark~7.}
%

\begin{figure}[!tpb]
	\centering
	\vspace{0pt}
	\includegraphics[width=0.35\textwidth]{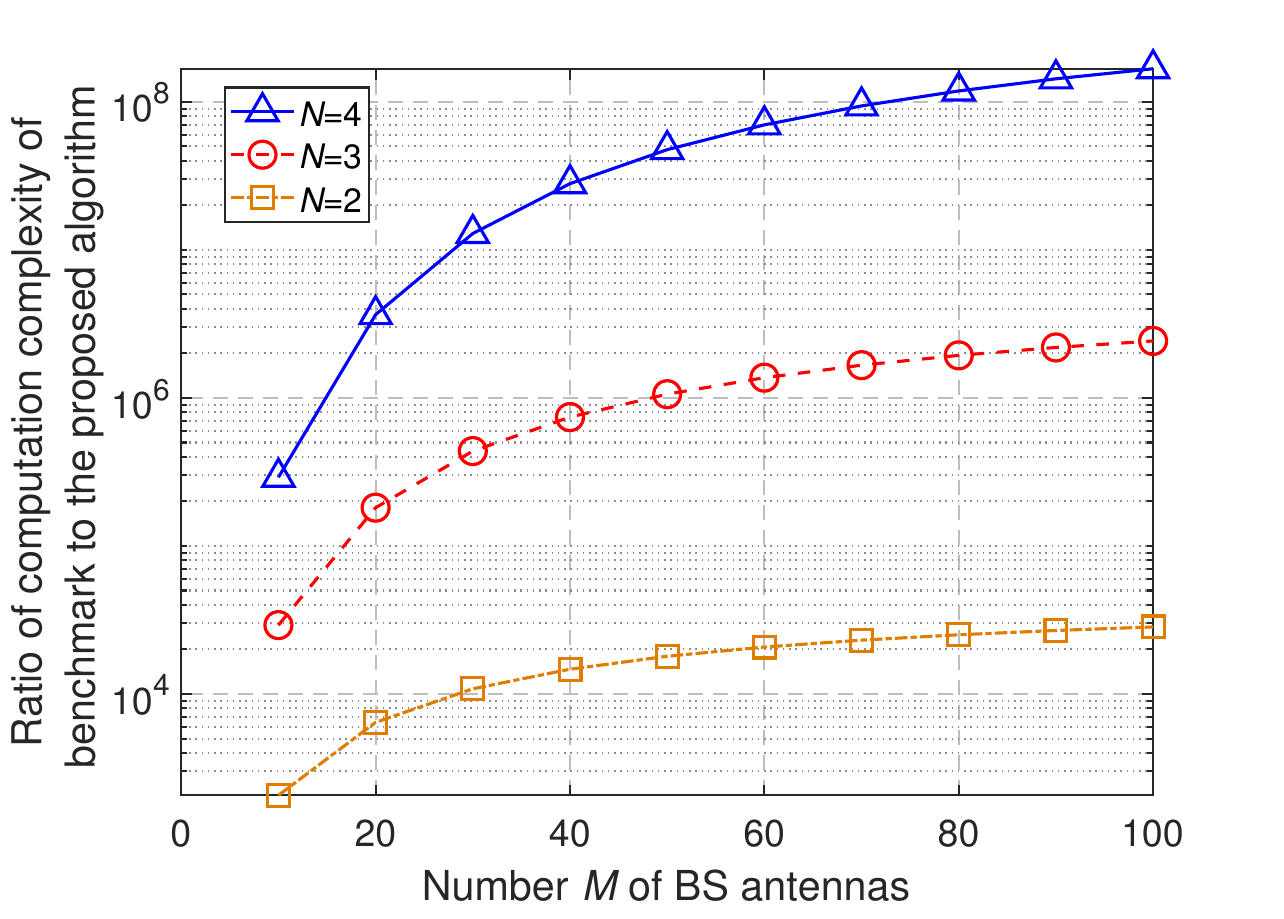}
	\caption{Ratio of computation complexity of a benchmark to the proposed algorithm with the number of antenna elements within each subarray $M_0=10$. The benchmark algorithm refers to the one where the objective function is directly calculated through the definition of permanent in (\ref{def_per}) without dividing the BS antenna array into subarrays. The other steps of the conventional scheme is the same as those of the proposed algorithm.}
	\vspace{-0.4cm}
	\label{fig_ratio_complexity}
\end{figure}

\begin{figure*}[!tpb]
	\centering
	\subfigure[Relationship between distance-dependent pathloss, i.e., $\beta_0d^{-\alpha}$, and \newline power allocation over BS antennas, given same XPD for all BS antennas]{
		\begin{minipage}[b]{0.49\textwidth}
			\centering
			\vspace{-0.2cm}
			\includegraphics[width=.7\textwidth]{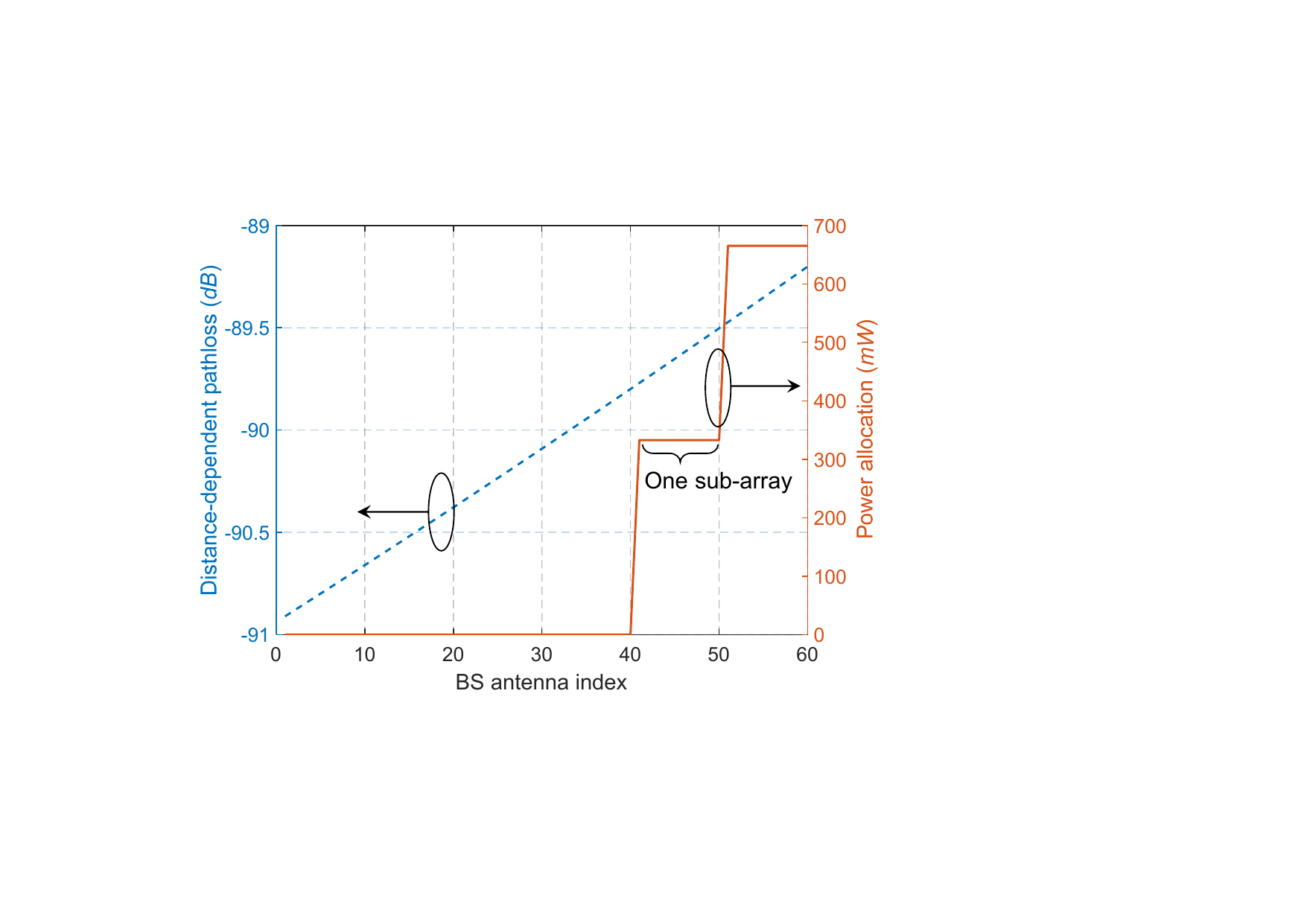}
			\label{fig_pathloss_vs_power}
	\end{minipage}}
	\subfigure[Relationship between XPD and power allocation over BS antennas, given the same pathloss for all BS antennas]{
		\begin{minipage}[b]{0.49\textwidth}
			\centering
			\vspace{-0.2cm}
			\includegraphics[width=.7\textwidth]{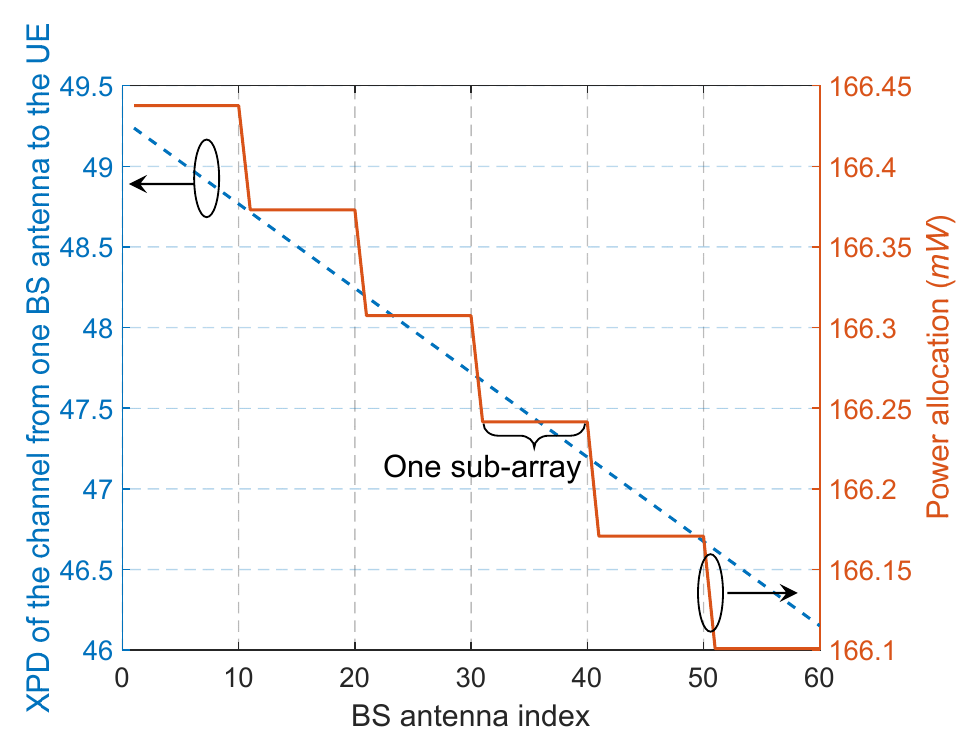}
			\label{fig_XPD_vs_power}
	\end{minipage}}
		\vspace{-0.1cm}
	\caption{Dependence of power allocations on XPD and pathloss over the BS antenna array, with the distance, azimuth angle, and elevation angle of the user given by $(r_U,\theta_U,\phi_U)=(30~m,\frac{\pi}{2},\frac{\pi}{2})$. A linear antenna array is deployed at BS, which can be divided into $S=6$ subarrays, each containing $M_0=10$ antenna elements.}
	\label{power_allocation}
	\vspace{-0.5cm}
\end{figure*}


Fig.~\ref{power_allocation} investigates the dependence of optimal power allocations, i.e., the diagonal elements of the optimal transmit covariance matrix acquired by the proposed algorithm, on XPD and distance-dependent pathloss. From Fig.~\ref{power_allocation}, we can find that both the variations of XPD and pathloss with the BS antennas can lead to unequal transmit power allocations, which is consistent with Theorem~\ref{theorem_conventional_not_apply}. Further, more transmit power should be allocated to antennas with higher XPD or less severe pathloss in order to maximize the system capacity. {Furthermore, Fig.~\ref{fig_pathloss_vs_power} shows that there exists large gap between the pathloss and the power allocation, i.e., most transmit power is allocated to a small portion of antennas while the antenna elements with small indices do not transmit signals. This is because the BS tends to allocate as much transmit power as possible to the antennas with less severe pathloss~\cite{Gao_eigenmode_2009}.}

\vspace{-0.2cm}
\section{Conclusion}
\label{sec_conclusion}
In this paper, we have investigated a downlink single-user XL-MIMO system using dual-polarized antennas. A dual-polarized XL-MIMO channel model has been proposed, where the variations of XPD and pathloss across BS antennas have been considered. Based on the proposed model, we have introduced a non-uniform XPD distance to complement existing near-far field boundary. Further, an efficient permanent-based transmit covariance matrix optimization algorithm has been proposed to maximize the ergodic capacity, which accounts for the non-uniform XPD and non-uniform pathloss. Based on our analysis and numerical results, the following conclusions are drawn:
\begin{itemize}
	\item When the BS-user distance is shorter than the derived non-uniform XPD distance or the aperture of the BS antenna array exceeds the derived non-uniform XPD aperture, propagation distances or AoDs vary across different BS antennas. Therefore, channels corresponding to different BS antennas exhibit distinct XPDs, which differs from the situation for conventional dual-polarized massive MIMO systems.
	\item Unlike classic Rayleigh distances, if non-uniform XPD distances are used for refined near-far field boundary, users with inclined directions are more likely to be located within the near-field region compared to those in the normal direction.
	\item Unlike conventional dual-polarized massive MIMO systems, the optimal transmit covariance matrix is not scalar matrix\footnote{A scalar matrix is a square matrix in which all of the principal diagonal elements are equal and the remaining elements are zero.} in dual-polarized XL-MIMO systems, where the transmit power (i.e., the diagonal elements of the transmit covariance matrix) tends to be allocated to BS antennas with less severe pathloss or higher XPD.
\end{itemize}

\begin{appendices}
	\vspace{-0.2cm}
	\section{Proof of Theorem~\ref{theorem_minimum_array_size}}
	\label{app_theorem_minimum_array_size}
	We first prove that when the aperture of the BS antenna array satisfies 
	\begin{align}
		\label{aperture}
		A>\frac{k}{1+k^2}(\frac{2r_U}{\eta\delta_U}(1-\frac{2}{\gamma^{th}_1+1})),
	\end{align}
	the difference of the XPD term $\chi_1$ between different BS antennas cannot be neglected, i.e.,
	\begin{align}
		\label{chi1_not_equal}
		\frac{\max_m\chi_1(d_{m})}{\min_m\chi_1(d_{m})}>\gamma^{th}_1.
	\end{align}
	\begin{remark}
		\label{remark_chi1_lower_bound}
		The ratio between the maximum value and the minimum value for $\chi_1$ over the diagonal antenna elements serves as a lower bound for the left-hand-side~(LHS) of (\ref{chi1_not_equal}), which represents the maximum value and the minimum value for $\chi_1$ over the whole antenna array, i.e.,
		\begin{align}
			\label{inequality_diagonal}
			\frac{\max_m\chi_1(d_{m})}{\min_m\chi_1(d_{m})}\ge\frac{\max_{m\in\mathcal{D}}\chi_1(d_{m})}{\min_{m\in\mathcal{D}}\chi_1(d_{m})}.
		\end{align}
		Here, $\mathcal{D}$ represents the set of the diagonal antenna elements.
	\end{remark}
	Therefore, we focus only on the diagonal antenna elements in the following to simplify analysis. 
	
	Note that the distance decay exponent in the expression for $\chi_1$, i.e., (\ref{influence_distance}), generally takes positive values~\cite{Shafi_polarized_2006}. Therefore, the maximum and minimum value of $\chi_1$ are achieved when the Tx-Rx distance is maximized and minimized, respectively. Note that the distance $r_U$ between the UE and the center of the BS antenna array is usually much larger than the size of the antenna array~\cite{Zeng_multi_user_RRS}. Therefore, the maximum and minimum Tx-Rx distances are achieved at the two end of the diagonal antenna elements. To this end, we try to determine the value of $\chi_1$ corresponding to these two antennas. 
	
	Consider the plane determined by the diagonal antenna elements and the UE. Further, define $\delta_U$ as the cosine value of the angle between the diagonal elements and the direction from the center of the BS antennas to the UE, whose expression can be found in (\ref{delta_U}). Denote the length of the diagonal of the antenna array by $D$. Then, the distance from the UE to the two antenna elements can be given by $r_U+\frac{D}{2}\delta_U$ and $r_U-\frac{D}{2}\delta_U$, respectively, where the corresponding values of the term $\chi_1$ can be given by $(r_U+\frac{D}{2}\delta_U)^\eta$ and $(r_U-\frac{D}{2}\delta_U)^\eta$, respectively. By applying absolute value operation, the maximum and minimum value of $\chi_1$ over the diagonal elements are
	\begin{align}
		\max_{m\in\mathcal{D}}\chi_1(d_{m})=(r_U+\frac{D}{2}|\delta_U|)^\eta,\\
		\min_{m\in\mathcal{D}}\chi_1(d_{m})=(r_U-\frac{D}{2}|\delta_U|)^\eta,
	\end{align}
	respectively. Therefore, according to Remark~\ref{remark_chi1_lower_bound}, we have
	\begin{align}
		\label{chi_1_lower_bound}
		\frac{\max_m\chi_1(d_{m})}{\min_m\chi_1(d_{m})}>\frac{(r_U+\frac{D}{2}|\delta_U|)^\eta}{(r_U-\frac{D}{2}|\delta_U|)^\eta}.
	\end{align}

	As we have mentioned, the BS-UE distance $r_U$ is much larger than the size of the antenna array. Therefore, the left-hand-side~(LHS) of (\ref{chi_1_lower_bound_thr}) can be simplified using a Taylor expansion, i.e.,
	\begin{align}
		\label{chi_1_lower_bound_approx}
		\frac{(r_U+\frac{D}{2}|\delta_U|)^\eta}{(r_U-\frac{D}{2}|\delta_U|)^\eta}\approx \frac{(r_U)^\eta(1+\frac{\eta D|\delta_U|}{2r_U})}{(r_U)^\eta(1-\frac{\eta D|\delta_U|}{2r_U})}.
	\end{align}
	Based on (\ref{chi_1_lower_bound_approx}), it is easy to find that when the aperture of the BS antenna satisfies (\ref{aperture}), we have
	\begin{align}
		\label{chi_1_lower_bound_thr}
		\frac{(r_U+\frac{D}{2}|\delta_U|)^\eta}{(r_U-\frac{D}{2}|\delta_U|)^\eta}>\gamma_1^{th}.
	\end{align}
	According to (\ref{chi_1_lower_bound}) and (\ref{chi_1_lower_bound_thr}), we can prove (\ref{chi1_not_equal}).
	
	We then show that when the aperture of the BS antenna array satisfies 
	\begin{align}
		\label{aperture_2}
		A>\frac{k}{1+k^2}(\frac{2\chi_2(\{\phi_0^{(l)}\}_l)(\gamma^{th}_2-1)}{(\gamma^{th}_2+1)b})^2,
	\end{align}
	the difference of the XPD term $\chi_2$ between different BS antennas cannot be neglected, i.e.,
	\begin{align}
		\label{chi2_not_equal}
		\frac{\max_m\chi_2(\{\phi_m^{(l)}\}_l)}{\min_m\chi_2(\{\phi_m^{(l)}\}_l)}>\gamma^{th}_2.
	\end{align}
	
	Similar to Remark~\ref{remark_chi1_lower_bound}, the ratio between the maximum value and the minimum value for $\chi_2$ over the diagonal antenna elements serves as a lower bound for the left-hand-side~(LHS) of (\ref{chi2_not_equal}), i.e.,
	\begin{align}
		\label{chi2_lower_bound}
		\frac{\max_m\chi_2(\{\phi_m^{(l)}\}_l)}{\min_m\chi_2(\{\phi_m^{(l)}\}_l)}>\frac{\max_{m\in\mathcal{D}}\chi_2(\{\phi_m^{(l)}\}_l)}{\min_{m\in\mathcal{D}}\chi_2(\{\phi_m^{(l)}\}_l)}.
	\end{align}
	Therefore, we only focus on the diagonal antenna elements in the following to simplify analysis. 
	
	Denote the difference between the AoD of the $m$-th antenna element and that of the central antenna element as $\Delta\phi_m^{(l)}$, i.e., $\Delta\phi_m^{(l)}\triangleq\phi_m^{(l)}-\phi_0^{(l)}$.
	Since the distance $r_U$ between the BS and the UE is much larger than the aperture size of the BS antenna array, $\Delta\phi_m^{(l)}$ takes a small value. To this end, $\chi_2(\{\phi_m^{(l)}\}_l)$ can be approximated by its first-order Taylor expansion, i.e.,
	\begin{align}
		\label{chi2_Taylor_approximation}
		\chi_2(\{\phi_m^{(l)}\}_l)\approx \chi_2(\{\phi_0^{(l)}\}_l)+\sum_l \left.\dfrac{\partial \chi_2 }{\partial \phi_m^{(l)}}\right |_{\{\phi_m^{(l)}\}_l=\{\phi_0^{(l)}\}_l}\Delta\phi_m^{(l)}.
	\end{align} 
	Further, the difference $\Delta\phi_m^{(l)}$ between AoDs is linear with respect to antenna index $m$, i.e.,
	\begin{align}
		\label{linear_difference_AoD}
		\Delta\phi_m^{(l)}=\frac{-\delta^{(l)}\sqrt{1-(\delta^{(l)})^2}}{|\delta^{(l)}|r^{(l)}_0\sin\theta^{(l)}_0}md_D.
	\end{align}
	Here, $d_D$ represents the spacing between two adjacent diagonal antenna elements, $\delta^{(l)}=\frac{\cos\phi_0^{(l)}+k\sin\phi_0^{(l)}}{\sqrt{1+k^2}}$ represents the cosine value of the angle between the diagonal direction and the direction from the center of the BS antenna array to the $l$-th cluster, and $r^{(l)}_0\sin\theta^{(l)}_0$ corresponds to the distance between the projection of the $l$-th cluster on the $x-y$ plane and the center of the BS antenna array. Based on (\ref{chi2_Taylor_approximation}) and (\ref{linear_difference_AoD}), we can find that the maximum and minimum values of $\chi_2(\{\phi_m^{(l)}\}_l)$ can be achieved at the two end of the diagonal antenna elements, which can be expressed as
	\begin{align}
		\max_{m\in\mathcal{D}}\chi_2(\{\phi_m^{(l)}\}_l)=\chi_2(\{\phi_0^{(l)}\}_l)+b\frac{D}{2},\\
		\label{min_chi2}
		\min_{m\in\mathcal{D}}\chi_2(\{\phi_m^{(l)}\}_l)=\chi_2(\{\phi_0^{(l)}\}_l)-b\frac{D}{2},
	\end{align}
	respectively. Here, $D$ is the length of the diagonal of the antenna array, and the definition of $b$ can be found in (\ref{expression_b}). Therefore, according to (\ref{chi2_lower_bound}), we have
	\begin{align}
		\label{chi2_lower_bound_v2}
		\frac{\max_m\chi_2(\{\phi_m^{(l)}\}_l)}{\min_m\chi_2(\{\phi_m^{(l)}\}_l)}>\frac{\chi_2(\{\phi_0^{(l)}\}_l)+b\frac{D}{2}}{\chi_2(\{\phi_0^{(l)}\}_l)-b\frac{D}{2}}.
	\end{align}
	Based on (\ref{chi2_lower_bound}), it is easy to find that (\ref{chi2_not_equal}) holds when the aperture of BS antenna array satisfies (\ref{aperture_2}), which ends proof.
	\vspace{-0.4cm}
	\section{Proof of Theorem~\ref{theorem_maximum_user_distance}}
	\label{app_theorem_maximum_distance}
	The main idea for proving Theorem~\ref{theorem_maximum_user_distance} is similar to that for proving Theorem~\ref{theorem_minimum_array_size}. Following the proof procedure from (\ref{inequality_diagonal}) to (\ref{chi_1_lower_bound_approx}), we can arrive at
	\begin{align}
		\label{lower_bound_for_ratio}
		\frac{\max_m\chi_1(d_{m})}{\min_m\chi_1(d_{m})}>\frac{1+\frac{\eta D|\delta_U|}{2r_U}}{1-\frac{\eta D|\delta_U|}{2r_U}}.
	\end{align}
	
	Based on (\ref{lower_bound_for_ratio}), when the distance $r_U$ between the user and the center of the BS antenna array satisfies
	\begin{align}
		\label{r_U_th_normal}
		r_U\le \frac{\gamma_1^{th}+1}{\gamma_1^{th}-1}\frac{\eta D \delta_U}{2},
	\end{align}
	we have $\frac{\max_m\chi_1(d_{m})}{\min_m\chi_1(d_{m})}>\gamma_1^{th}$. 
	
	{
	We should note that the above derivation is based on the assumption that the maximum and minimum propagation distances are achieved at the two ends of diagonal elements, which, however, does not hold when the near-field region corresponds to a very short distance. In particular, a situation where this does hold occurs when
	\begin{align}
		\label{user_dir}
		\delta_U\le \sqrt{1-\frac{4}{(\gamma_1^{th})^{\frac{2}{\eta}}+3}}.
	\end{align}
	Therefore, to make the derivation more rigorous, we will deal with this case in the following. Specifically, when the user direction satisfies the condition in (\ref{user_dir}), the minimum Tx-Rx distance is achieved by the antenna element that lies in the projection point of the user over the diagonal of the antenna array, while the maximum Tx-Rx distance is achieved by the antenna element at one end of the diagonal antenna elements. Therefore, we have
	\begin{align}
		\label{max_counter}
		\max_{m\in\mathcal{D}}\chi_1(d_{m})=(\sqrt{r_U^2+\frac{D^2}{4}+\delta_Ur_UD})^\eta,\\
		\label{min_counter}
		\min_{m\in\mathcal{D}}\chi_1(d_{m})=(r_U\sqrt{1-\delta_U^2})^\eta.
	\end{align}
	Based on (\ref{max_counter}) and (\ref{min_counter}), we can find that when the distance $r_U$ between the user and the center of the BS antenna array satisfies
	\begin{align}
		\label{r_U_th_counter}
		r_U\le\frac{-D\delta_U-\sqrt{((\gamma_1^{th})^\frac{2}{\eta}-1)D^2(1-\delta_U^2)}}{2(1-(\gamma_1^{th})^\frac{2}{\eta}+(\gamma_1^{th})^\frac{2}{\eta}\delta_U^2)},
	\end{align}
	we have $\frac{\max_m\chi_1(d_{m})}{\min_m\chi_1(d_{m})}>\gamma_1^{th}$. By combining (\ref{r_U_th_normal}), (\ref{user_dir}), and (\ref{r_U_th_counter}), we can derive the contribution of the XPD term $\chi_1$ on the non-uniform XPD distance denoted by $r_1^{th}$, whose expression is given in (\ref{pattern_feed}).}
	
	{
	Then, we move on to discuss the AoD-dependent XPD term $\chi_2$, and prove that when the propagation distance $r_U$ is shorter than the non-uniform XPD distance in (\ref{expression_distance_th}), we have
	\begin{align}
		\label{chi2_not_equal_v2}
		\frac{\max_m\chi_2(\{\phi_m^{(l)}\}_l)}{\min_m\chi_2(\{\phi_m^{(l)}\}_l)}>\gamma^{th}_2.
	\end{align}
	
	As mentioned in Appendix~\ref{app_theorem_minimum_array_size}, the ratio between the maximum value and the minimum value for $\chi_2$ over the diagonal antenna elements serves as a lower bound for the left-hand-side~(LHS) of (\ref{chi2_not_equal_v2}), i.e.,
	\begin{align}
		\label{chi2_lower_bound_v3}
		\frac{\max_m\chi_2(\{\phi_m^{(l)}\}_l)}{\min_m\chi_2(\{\phi_m^{(l)}\}_l)}>\frac{\max_{m\in\mathcal{D}}\chi_2(\{\phi_m^{(l)}\}_l)}{\min_{m\in\mathcal{D}}\chi_2(\{\phi_m^{(l)}\}_l)}.
	\end{align}
	Therefore, we focus only on the diagonal antenna elements in the following to simplify the analysis.
	
	Following (\ref{chi2_Taylor_approximation})- (\ref{min_chi2}), we can find the maximum value and minimum value of $\chi_2$ over the diagonal elements of the antenna array, as shown below:
	\begin{align}
		\label{max_chi_2}
		&\max_{m\in\mathcal{D}}\chi_2(\{\phi_m^{(l)}\}_l)\notag\\
		&=\chi_2(\{\phi_0^{(l)}\}_l)+\left|\sum_{l}\frac{\partial \chi_2}{\partial \phi_m^{(l)}} \bigg|_{\phi_m^{(l)}=\phi_0^{(l)}}\frac{\sin(\arctan(k)+\phi_0^{(l)})D}{2\sin\theta_0^{(l)}r_0^{(l)}}\right|,\\
		\label{min_chi_2}
		&\min_{m\in\mathcal{D}}\chi_2(\{\phi_m^{(l)}\}_l)\notag\\
		&=\chi_2(\{\phi_0^{(l)}\}_l)-\left|\sum_{l}\frac{\partial \chi_2}{\partial \phi_m^{(l)}} \bigg|_{\phi_m^{(l)}=\phi_0^{(l)}}\frac{\sin(\arctan(k)+\phi_0^{(l)})D}{2\sin\theta_0^{(l)}r_0^{(l)}}\right|.
	\end{align}
	Note that as the user moves away from the BS, the distance between the BS and scatterer clusters also becomes larger~\cite{Zhou_geo_2019}. To characterize this trend, we model the distance $r_0^{(l)}$ between the $l$-th cluster and the BS by
	\begin{align}
		\label{distance_scatter}
		r_0^{(l)}=c^{(l)}r_U.
	\end{align}
	Based on (\ref{max_chi_2}), (\ref{min_chi_2}), and (\ref{distance_scatter}), we can find that (\ref{chi2_not_equal_v2}) holds when the distance between the BS and the UE is shorter than the non-uniform XPD distance, which ends the proof.}


	\section{Proof of Lemma~\ref{lemma_convetional_MIMO_identity}}
	\label{app_lemma_convetional_MIMO_identity}
	According to~\cite{Gao_eigenmode_2009}, the optimal transmit covariance matrix is diagonal when MIMO channels satisfy
	\begin{align}
		\label{MIMO_channel_form_for_diagonal}
		\bm{G}=\bm{D}+\bm{M}\odot\bm{G_{iid}},
	\end{align}
	where $\bm{D}$ is a $2N\times 2M$ deterministic matrix with at most one nonzero element in each row and each column, $\bm{M}$ is a $2N\times 2M$ deterministic matrix with nonnegative elements, and $\bm{G}_{iid}$ is a $2N\times 2M$ random matrix with independent and identically distributed~(i.i.d.) elements having zero means. Therefore, by letting $\bm{D}=\bm{0}$, and setting $\bm{M}$ and $\bm{G_{iid}}$ as the matrix composed of pathloss and small-scale fading, respectively, the conventional dual-polarized massive MIMO channel can be written in the form of (\ref{MIMO_channel_form_for_diagonal}). Consequently, the optimal transmit covariance is a diagonal matrix for conventional massive MIMO systems with dual-polarized antennas, where the diagonal elements correspond to the transmit power of different BS antennas.
	
	Note that in the conventional massive MIMO with dual-polarized antennas, the antenna array is much smaller than that in dual-polarized XL-MIMO systems. Therefore, distance-dependent pathloss and XPD for different BS antennas can be regarded as the same. Therefore, based on~(\ref{pathloss}), we have
	\begin{align}
		\label{same_pathloss_copolarized}
		\beta^{(HH)}_{m_1}=\beta^{(HH)}_{m_2}=\beta^{(VV)}_{m_1}=\beta^{(VV)}_{m_2},\\
		\label{same_pathloss_crosspolarized}
		\beta^{(VH)}_{m_1}=\beta^{(VH)}_{m_2}=\beta^{(HV)}_{m_1}=\beta^{(HV)}_{m_2},
	\end{align}
	for any BS antennas $m_1$ and $m_2$. Note that the small-scale fading has the same distribution among different BS antennas and different polarizations. Therefore, based on (\ref{same_pathloss_copolarized}), we can derive that the co-polarized channel components corresponding to different BS antennas and polarizations, i.e., $\{g_{n,1}^{(VV)},\dots,g_{n,M}^{(VV)},g_{n,1}^{(HH)},\dots,g_{n,M}^{(HH)}\}$, are identically distributed. Similarly, based on (\ref{same_pathloss_crosspolarized}), it can be found that the cross-polarized channel components have the same distribution. Therefore, we can conclude that the transmit power should be equally allocated among different antennas and polarizations, indicating that the optimal transmit covariance matrix is a scalar matrix, i.e., 
	\begin{align}
		\bm{Q}^*=\frac{1}{2M}\bm{I}_{2M}.
	\end{align}
	\vspace{-.2cm}
	\section{Proof of Remark~\ref{remark_complexity_bound}}
	\label{app_remark_complexity_bound}
	Note that both $(\frac{(k-1)2N!}{(2N-k)!}+1)$ and $(\frac{(2S+k)!}{(k+1)!(2S)!})$ increase with $k$. Therefore, they can be upper bounded by
	\begin{align}
		\label{upper_bound_1}
		\frac{(k-1)2N!}{(2N-k)!}+1\le (2N-1)2N!+1,
	\end{align}
	\begin{align}
		\label{upper_bound_2}
		\frac{(2S+k)!}{(k+1)!(2S)!}\le \frac{(2S+2N)!}{(2N+1)!(2S)!},
	\end{align}
	respectively. Further, since $(2N-1)2N!\gg 1$, the right-hand-side of (\ref{upper_bound_1}) can be approximated by
	\begin{align}
		\label{approx}
		(2N-1)2N!+1\approx (2N-1)2N!.
	\end{align}
	By substituting (\ref{upper_bound_1}), (\ref{upper_bound_2}), and (\ref{approx}) into the expression of $\mathbb{N}_C^{sim}$ in (\ref{computation_complexity_simplify}), we can arrive at (\ref{upper_bound_computation_complexity}), which ends the proof.

\end{appendices}
\vspace{-.2cm}


\begin{thebibliography}{11}
\bibitem{Wu_distance_aware_2022}
Z. Wu, M. Cui, Z. Zhang, and L. Dai, ``Distance-Aware Precoding for Near-Field Capacity Improvement in XL-MIMO," in \emph{Proc. IEEE 95th Veh. Technol. Conf. (VTC2022-Spring)}, Helsinki, FI, Jun. 2022.

\bibitem{Deng_RHS_2022}
R. Deng, B. Di, H. Zhang, Y. Tan, and L. Song, ``Reconfigurable Holographic Surface-Enabled Multi-User Wireless Communications: Amplitude-Controlled Holographic Beamforming," \emph{IEEE Trans. Wireless Commun.}, vol.~21, no.~8, pp.~6003-6017, Jan. 2022.


\bibitem{Han_XLMIMO_2020}
Y. Han, S. Jin, CK. Wen, and X. Ma, ``Channel Estimation for Extremely Large-Scale Massive MIMO Systems," \emph{IEEE Wireless Commun. Lett.}, vol.~9, no.~5, pp.~633-637, May 2020.

\bibitem{Zeng_multi_user_RRS}
S. Zeng, H. Zhang, B. Di, and L. Song, ``Reconfigurable Refractive Surface-Enabled Multi-User Holographic MIMO Communications," \emph{IEEE Trans. Wireless Commun.}, early access.


\bibitem{Guo_2023}
{
S. Guo and K. Qu, ``Beamspace Modulation for Near Field Capacity Improvement in XL-MIMO Communications," \emph{IEEE Wireless Commun. Lett.}, vol.~12, no.~8, pp.~1434-1438, Aug. 2023.}





\bibitem{Zeng_RRS_letter_2022}
S. Zeng, H. Zhang, B. Di, H. Qin, X. Su, and L. Song, ``Reconfigurable Refractive Surfaces: An Energy-Efficient Way to Holographic MIMO," \emph{IEEE Commun. Lett.}, vol.~26, no.~10, pp.~2490-2494, Oct. 2022.

\bibitem{Guo_2024}
{
K. Qu, S. Guo, J. Ye and H. Zhao, ``Two-Stage Beamspace MUSIC-Based Near-Field Channel Estimation for Hybrid XL-MIMO," \emph{IEEE Commun. Lett.}, vol.~28, no.~8, pp.~1949-1953, Aug. 2024.}

\bibitem{Guru_EM_field_2004}
B. S. Guru and H. R. Hiziroglu, \emph{Electromagnetic Field Theory Fundamentals}, Cambridge, U.K.: Cambridge University Press, 2004. 

\bibitem{Stutzman_polarization_2018}
W. L. Stutzman, \emph{Polarization in Electromagnetic Systems}, Norwood, MA: Artech House, 2018.

\bibitem{Zeng_both_2021}
S. Zeng, H. Zhang, B. Di, Y. Tan, Z. Han, H. V. Poor, and L. Song, ``Reconfigurable Intelligent Surfaces in 6G: Reflective, Transmissive, or Both," \emph{IEEE Commun. Lett.},  vol.~25, no.~6, pp.~2063-2067, Jun. 2021.


\bibitem{Kim_limited_feedback}
T. Kim, B. Clerckx, D. J. Love, and S. J. Kim, ``Limited Feedback Beamforming Systems for Dual-Polarized MIMO Channels," \emph{IEEE Trans. Wireless Commun.}, vol.~9, no.~11, pp.~3425-3439, Nov. 2010.

\bibitem{Zeng_dual_polarized_RIS}
S. Zeng, H. Zhang, B. Di, Z. Han, and H. V. Poor, ``Dual-Polarized Reconfigurable Intelligent Surface-Based Antenna for Holographic MIMO Communications," \emph{IEEE Trans. Wireless Commun.}, vol.~23, no.~11, pp.~17339-17353, Nov. 2024.






\bibitem{Yue_channel_2024}
S. Yue, S. Zeng, L. Liu, Y. C. Eldar, and B. Di, ``Hybrid Near-Far Field Channel Estimation for Holographic MIMO Communications", \emph{IEEE Trans. Wireless Commun.}, to be published.

\bibitem{Sun_differentiate}
S. Sun, R. Li, C. Han, X. Liu, L. Xue, and M. Tao, ``How to Differentiate between Near Field and Far Field: Revisiting the Rayleigh Distance," [online]:\url{https://arxiv.org/abs/2309.13238}



\bibitem{Selvan_Rayleigh_2017}
K. T. Selvan and R. Janaswamy, ``Fraunhofer and fresnel distances: Unified derivation for aperture antennas." \emph{IEEE Antennas Propaga. Mag.}, vol.~59, no.~4, pp.~12-15, Aug. 2017.

\bibitem{Lu_XL_MIMO_ICC}
H. Lu and Y. Zeng, ``How does performance scale with antenna number for extremely large-scale MIMO?" in \emph{Proc. IEEE International Conference on Communication~(IEEE ICC)}, Jun. 2021, pp.~1–6.

\bibitem{Lu_XL_MIMO_2022}
H. Lu and Y. Zeng, ``Communicating with Extremely Large-Scale Array/Surface: Unified Modeling and Performance Analysis," \emph{IEEE Trans. Wireless Commun.}, vol.~21, no.~6, pp.~4039-4053, Jun. 2022.

\bibitem{effective_Rayleigh}
M. Cui and L. Dai, ``Near-field wideband beamforming for extremely large antenna array," [Online]: \url{https://arxiv.org/pdf/2109.10054}.

\bibitem{Zeng_IMT2030}
S. Zeng, B. Di, H. Zhang, J. Gao, S. Yue, X. Hu, et. al., ``RIS-Based IMT-2030 Testbed for MmWave Multi-Stream Ultra-Massive MIMO Communications," \emph{IEEE Wireless Commun.}, to be published.

\bibitem{Zheng_joint_2024}
{
	G. Zheng, M. Wen, J. Wen and C. Shan, ``Joint Hybrid Precoding and Rate Allocation for RSMA in Near-Field and Far-Field Massive MIMO Communications," \emph{IEEE Wireless Commun. Lett.}, vol.~13, no.~4, pp.~1034-1038, Apr. 2024.}

\bibitem{Shi_spatial_2024}
{
	X. Shi, J. Wang, Z. Sun and J. Song, ``Spatial-Chirp Codebook-Based Hierarchical Beam Training for Extremely Large-Scale Massive MIMO," \emph{IEEE Trans. Wireless Commun.}, vol.~23, no.~4, pp.~2824-2838, Apr. 2024.}

\bibitem{Ozdogan_dual_polarized_2023}
{\"O}. {\"O}zdogan and E. Bj{\"o}rnson, ``Massive MIMO with Dual-Polarized Antennas," \emph{IEEE Trans. Wireless Commun.}, vol.~22, no.~2, pp.~1448-1463, Feb. 2023.





\bibitem{bruno_MIMO_XPC_2013}
B. Clerckx and C. Oestges, \emph{MIMO Wireless Networks: Channels, Techniques and Standards for Multi-Antenna, Multi-User and Multi-Cell Systems.} New York, NY: Academic, 2013.

\bibitem{Yuanwei_tutorial_2023}
Y. Liu, Z. Wang, J. Xu, C. Ouyang, X. Mu, adn R. Schober, ``Near-Field Communications: A Tutorial Review," \emph{IEEE Open J. Commun. Soc.}, vol.~4, no.~1, pp.~1999-2049, Aug. 2023.



\bibitem{YK_antenna_2008}
Y. Huang and K. Boyle, \emph{Antennas: From Theory to Practice}, Chichester, U.K.: Wiley, 2008.

\bibitem{yu_dual_polarized_2022}
Y. Han, X. Li, W. Tang, S. Jin, Q. Cheng, and T. Cui, ``Dual-polarized RIS-assisted mobile communications," \emph{IEEE Trans. Wireless Commun.}, vol.~21, no.~1, pp.~591-606, Jan. 2022.

{
\bibitem{Dong_spatial_correlation_early}
Z. Dong, X. Li and Y. Zeng, "Characterizing and utilizing near-field spatial correlation for XL-MIMO communication," \emph{IEEE Trans. Commun.}, early access.}

\bibitem{Shin_Keyhole_2003}
H. Shin and J. H. Lee, ``Capacity of multiple-antenna fading channels: Spatial fading correlation, double scattering, and keyhole," \emph{IEEE Trans. Inf. Theory}, vol.~49, no.~10, pp.~2636-2647, Oct. 2003.

\bibitem{A_2005}
A. Goldsmith, \emph{Wireless Communications}, New York, NY: Cambridge University Press, 2005.

\bibitem{Zhou_geo_2019}
L. Zhou, F. Luan, S. Zhou, A. F. Molisch, and F. Tufvesson, ``Geometry-Based Stochastic Channel Model for High-Speed Railway Communications," \emph{IEEE Trans. Veh. Technol.}, vol.~68, no.~5, pp.~4353-4366, May 2019.

\bibitem{Shafi_polarized_2006}
M. Shafi, M. Zhang, A. L. Moustakas, P. J. Smith, A. F. Molisch, F. Tufvesson, et. al., ``Polarized MIMO Channels in 3-D: Models, Measurements and Mutual Information," \emph{IEEE J. Sel. Areas Commun.}, vol.~24, no.~3, pp.~514-527, Mar. 2006.

\bibitem{Quitin_multipolarized_2009}
F. Quitin, C. Oestges, F. Horlin, and P. De Doncker, ``Multipolarized MIMO Channel Characteristics: Analytical Study and Experimental Results," \emph{IEEE Trans. Antennas Propag.}, vol.~57, no.~9, pp.~2739-2745, Sep. 2009.

\bibitem{Bjornson_small_cell}
E. Bjornson, M. Kountourisz, and M. Debbah, ``Massive MIMO and Small Cells: Improving Energy Efficiency by Optimal Soft-Cell Coordination," in \emph{Proc. IEEE International Conf. Telecommun.~(ICT)}, Casablanca, Morocco, May 2013.

\bibitem{Gao_eigenmode_2009}
X. Gao, B. Jiang, X. Li, A. B. Gershman, and M. R. McKay, ``Statistical Eigenmode Transmission Over Jointly Correlated MIMO Channels," \emph{IEEE Trans. Inf. Theory}, vol.~55, no.~8, pp.~3735-3750, Aug. 2009.

\bibitem{Boyd_2004}
S. Boyd and L. Vandenberghe, \emph{Convex Optimization}, New York, NY: Cambridge University Press, 2004.

\bibitem{Zeng_coverage_2021}
S. Zeng, H. Zhang, B. Di, Z. Han, and L. Song, ``Reconfigurable Intelligent Surface (RIS) Assisted Wireless Coverage Extension: RIS Orientation and Location Optimization," \emph{IEEE Commun. Lett.},  vol.~25, no.~1, pp.~269-273, Jan. 2021.

\bibitem{J_1999}
J. Nocedal and S. J. Wright, \emph{Numerical Optimization}, New York, NY: Springer, 1999.




\bibitem{Yue_IOS_2023}
S.Yue, S. Zeng, H. Zhang, F. Lin, L. Liu, and B. Di, ``Intelligent Omni-Surfaces Aided Wireless Communications: Does the Reciprocity Hold?", \emph{IEEE Trans. Veh. Technol.}, vol.~72, no.~6, pp.~8181-8185, Jun. 2023.

\bibitem{Zhang_IOS_mag}
H. Zhang, S. Zeng, B. Di, Y. Tan, M. D. Renzo, M. Debbah, et. al., ``Intelligent Omni-Surfaces for Full-Dimensional Wireless Communications: Principles, Technology, and Implementation," \emph{IEEE Commun. Mag.}, vol.~60, no.~2, pp.~39-45, Feb.~2022.



\end{thebibliography}
\end{document}